\theoremstyle{definition}
\newtheorem{definition}{Definition}
\newtheorem{remark}{Remark}
\theoremstyle{plain}
\newtheorem{proposition}{Proposition}
\newtheorem{theorem}{Theorem}
\newtheorem{lemma}{Lemma}
\begin{document}
\addtolength{\textheight}{0.8cm} 
\addtolength{\voffset}{-0.7cm} 
\begin{frontmatter}
\vspace{25pt}
\title{Handling Pedestrian Uncertainty in Coordinating Autonomous Vehicles at Signal-Free Intersections\thanksref{footnoteinfo}} 
\thanks[footnoteinfo]{This research was supported in part by NSF under Grants CNS-2401007, CMMI-2348381, IIS-2415478, and in part by MathWorks.}
\vspace{0pt}
\author[Paestum1]{Filippos N. Tzortzoglou}\ead{ft253@cornell.edu},    
\author[Paestum1]{Andreas A. Malikopoulos}\ead{amaliko@cornell.edu}
\address[Paestum1]{Department of Civil and Enviromental Engineering, Cornell University, Ithaca, NY 14853, USA}
\begin{keyword}                           
Connected and automated vehicles, pedestrian avoidance, autonomous intersections.
\end{keyword}                             
\begin{abstract}                          
In this paper, we provide a theoretical framework for the coordination of connected and automated vehicles (CAVs) at signal-free intersections, accounting for the unexpected presence of pedestrians. First, we introduce a general vehicle-to-infrastructure communication protocol and a low-level controller that determines the optimal unconstrained trajectories for CAVs, in terms of fuel efficiency and travel time, to cross the intersection without considering pedestrians. If such an unconstrained trajectory is unattainable, we introduce sufficient certificates for each CAV to cross the intersection while respecting the associated constraints. Next, we consider the case where an unexpected pedestrian enters the road. When the CAV's sensors detect a pedestrian, an emergency mode is activated, which imposes certificates related to an unsafe set in the pedestrian's proximity area. Simultaneously, a re-planning mechanism is implemented for all CAVs to accommodate the trajectories of vehicles operating in emergency mode. Finally, we validate the efficacy of our approach through simulations conducted in MATLAB and RoadRunner softwares, which facilitate the integration of sensor tools and the realization of real-time implementation.
\end{abstract}
\end{frontmatter}

\section{Introduction}
\vspace{-5pt}


\small
Emerging mobility systems, such as connected and automated vehicles (CAVs), on-demand mobility services, etc., offer substantial benefits for a sustainable transportation network; see \cite{kamargianni2016critical}. They can improve the quality of life and ensure high safety and efficiency levels; see \citet{zhao2020enhanced}. 
From the perspective of CAVs, the industry has made significant improvements in the last decade. Technologies like lane-keeping systems, emergency braking, and adaptive cruise control have already been employed to enhance safety; see \citet{eichelberger2016toyota}. Similarly, several studies have provided approaches that leverage such technologies to tackle congestion; see \cite{he2019adaptive}. While the results are promising, related research on the coordination of CAVs in traffic scenarios, as intersections and merging on-ramps, often neglects the uncertainty introduced by pedestrians; see \cite{wu2022intersection}. 
Especially, to the best of our knowledge there is not any approach in the literature that considers pedestrian's unexpected crossing at signal-free intersections. Next, we review the relevant literature in this area.
\subsection{Signal-free intersections and CAVs}
A research effort presented by \cite{tachet2016revisiting} showed that autonomous signal-free intersections, namely operating exclusively with CAVs, have the capability of doubling capacity and maximizing throughput in comparison to signalized ones. After the discussion of \cite{tachet2016revisiting}, several research efforts have addressed the problem of coordinating CAVs at signal-free intersection using different methodologies. Some approaches by  \cite{Malikopoulos2017,Malikopoulos2020} and \cite{zhang2019decentralized} proposed coordinating CAVs at signal-free intersections using optimal control. These studies tried to identify optimal trajectories for the CAVs with respect to fuel consumption while minimizing the exit time of each CAV from the intersection with a view to maximizing throughput. However, when using these approaches in congested environments, identifying unconstrained trajectories that do not activate any state or control constraints can be a challenging task for real-time implementation. This applies because the computationally expensive process of piecing together constrained and unconstrained arcs is required; see \cite{Malikopoulos2017} \textit{Section 3.1}. One method to address this problem is by utilizing control barrier functions (CBFs); see \cite{xu2022general,xiao2021bridging} that allow CAVs to avoid collision with other CAVs by sacrificing optimality in the final trajectory. Other research efforts have also utilized reinforcement learning \citep{zhou2019development,farazi2020deep} for the real-time trajectory identification of the CAVs. Along these lines recently a study proposed the utilization of numerical mathematics \cite{Tzortzoglou2024} to address the same problem. 

Finally, some studies have also considered mixed traffic environments where CAVs co-exist with human-driven vehicles; see \citep{gong2018cooperative,keskin2020altruistic,le2024stochastic}, and a combination of routing optimization with coordination of CAVs at traffic scenarios; see \citep{Bang2022combined}.

\subsection{Pedestrian avoidance}
Recent studies have also explored the impact of pedestrians on traffic networks. One early effort exploring pedestrians' behavior was proposed by \cite{hamed2001analysis}, who attempted to analyze the interaction between pedestrians and drivers at pedestrian crossings. Recently, \cite{niels2020simulation,niels2024optimization} proposed a framework where signal phases are defined for pedestrians and subsequently incorporated into a higher-level optimization problem. It is noteworthy that these studies assume pedestrians behave naturally at intersections. Conversely, some studies consider vehicle-pedestrian interaction when the presence of the pedestrian is not anticipated. That is, the pedestrian deviates from normal behavior, interfering with traffic and violating traffic rules.  For example, \cite{coelingh2010collision}  introduced a system called \textit{collision warning with full auto brake and pedestrian detection}, which assists drivers in avoiding both rear-end and pedestrian accidents by providing warnings and, if necessary, automatically applying full braking power. In this system, pedestrian detection takes place through vehicle sensors. Similarly, \cite{schratter2019pedestrian} presented a study where pedestrians are unanticipated and can only be detected through sensors, as well. The vehicle then generates a reachability set that needs to be avoided for each pedestrian and calculates collision-free trajectories. Some other research efforts predict pedestrian intentions using 2D or 3D convolutional neural networks; see \citet{razali2021pedestrian, singh2021multi}.

\subsection{Research gap and contributions of this paper}
While there is a considerable amount of work studying the operation of CAVs at signal-free intersections and pedestrian avoidance separately, to the best of our knowledge, no study has simultaneously addressed the presence of unexpected pedestrians and their impact at signal-free intersections. However, note that according to \cite{nhtsa2024pedestrians}, 16\% of pedestrian fatalities in 2022 occurred at intersection regions (p.~1, bullet 9). Thus, given the continuous integration of CAVs into the market, we believe that the problem of pedestrian safety under uncertainty at signal-free intersections operated by CAVs will attract considerable attention in the coming years.  This is not a trivial problem, as the nature of CAVs implies that an unexpected event may require simultaneous trajectory adjustments from all CAVs at the intersection, along with the implementation of an efficient control policy to avoid the pedestrian without colliding with other CAVs.  In this paper, we aim to address this problem.  The paper is divided into two parts. The first part (\textit{Sections 2-3}) focuses on defining optimal trajectories for the CAVs to safely cross the intersection while minimizing fuel consumption and travel time. The second part (\textit{Sections 4}) introduces a robust control policy for pedestrian avoidance, followed by a re-planning mechanism for all CAVs in the intersection. 

More specifically, in \textit{Section 2} we initially propose the vehicle's dynamics and constraints along with a general vehicle-to-infrastructure communication protocol for any traffic scenario. Next, in \textit{Section 3} we review the theoretical framework presented by \cite{Malikopoulos2020} to define optimal unconstrained trajectories for CAVs to cross a signal-free intersection. However, when an unconstrained trajectory is unattainable, we need to select an efficient method that can determine an alternative trajectory in real time. For that scenario, we select to build upon the work proposed by \cite{xiao2021bridging} and \cite{sabouni2024optimal}, who utilized CBFs to create trajectories that jointly minimize travel time and acceleration while respecting the corresponding constraints with the disadvantage of sacrificing optimality. However, \cite{sabouni2024optimal} assumes a zero standstill distance between vehicles to appropriately define the CBF conditions for vehicles in conflicting paths (see (13) therein and the parameters selection in Section 4). In this paper, we also provide sufficient theoretical arguments for deriving certificates that relax this assumption.

Then, in \textit{Section 4}, we continue our exposition by considering the case of an unexpected pedestrian's presence. When a pedestrian is detected along the trajectory of a CAV, the CAV enters an emergency mode focusing on avoiding the pedestrian. We define the unsafe area associated with the pedestrian, as a function of 1) the vehicle's speed, 2) the distance between the vehicle and the pedestrian, 3) the pedestrian's speed, and 4) the pedestrian's orientation. Based on this unsafe area, we derive certificates that guarantee our vehicle does not enter the pedestrian's unsafe set. These certificates result in CBF conditions (or barrier conditions) that constitute the constraints of a quadratic programming (QP) problem that minimizes the deviation of the actual control input from a reference control input. Note that during the critical event of pedestrian avoidance, some vehicles may deviate from their initial reference trajectories. Therefore, we implement a re-planning mechanism to optimally derive the reference trajectories of all vehicles, in real-time. Finally, we verify our approach through numerical simulations via MATLAB and RoadRunner. 

The contributions of this paper are as follows: 

\begin{itemize}
    \item Expanding the approach proposed by \cite{sabouni2024optimal} to allow CAVs to maintain a nonzero standstill distance when arriving at the intersection from intersecting paths.
    \item Introducing a controller that accounts for pedestrian uncertainty at signal-free intersections operated by CAVs.
    \item Defining an efficient metric for the safe set of each vehicle based on the pedestrian's state, considering the bicycle kinematic model to account for the vehicle's heading angle.
    \item Incorporating in our contoller an algorithm that enables a resequencing--replanning mechanism for the safe operation of CAVs when some deviate from their reference trajectories during the critical event of pedestrian presence.
\end{itemize}

\subsection{Organization of this paper}
The remainder of this paper is organized as follows. Section~2 presents the modeling framework of the CAVs. In Section~3, we introduce our controller in the absence of pedestrians, as previously discussed. Section~4 extends the controller to account for pedestrian detection and defines the resequencing--replanning mechanism. Section~5 provides simulation results using \textsc{Matlab}, and in Section~6 we draw some concluding remarks while providing directions for future work. Additional videos and discussion to facilitate understanding of our approach are available at: \href{https://sites.google.com/cornell.edu/pedestrianavoidance}{https://sites.google.com/cornell.edu/pedestrianavoidance}.
\section{Modeling framework}
Our analysis applies to any traffic scenario, i.e., merging on-ramp, roundabout, intersection etc. We intentionally select an intersection as a reference since it constitutes the most complex case.  The intersection includes a \textit{control zone} (see the left frame in Fig. \ref{fig:Intersection}) where CAVs can exchange information with a coordinator. The coordinator acts only as a database and is not involved in any control of the CAVs. The coordinator may consist of loop detectors or similar sensory devices (road units) capable of monitoring the state of CAVs within the control zone. We thoroughly discuss the communication protocol between CAVs and the coordinator in Section 2.2. 

Let $\mathcal{N}(t)=\{1,2,\dots,\textit{N}\},\;\textit{N}\in \mathbb{N},$ be the set of CAVs in the control zone at time $t$. Let $\mathcal{G}=\{1,2,\ldots,z,\ldots ,|\mathcal{G}| \},\; z \in \mathbb{N}$, be the set of the finite paths in the control zone crossing the intersection (see, for example, the right frame in Fig. \ref{fig:Intersection}). In our approach, we consider that CAVs follow their desired path before entering the control zone, preventing lane-change maneuvers. Although this consideration is reasonable given the nature of CAVs, it can be easily relaxed; see \cite{Malikopoulos2020}.    Lateral collisions may occur at the locations where different paths intersect, which we refer to as \textit{conflict points}. These points are denoted by the set $\mathcal{O}_z \subseteq \mathbb{N}$ for each path $z \in \mathcal{G}$. Namely, $\mathcal{O}_z$ contains the set of conflict points on path~$z$. For instance, in the upper right frame in Fig.~\ref{fig:Intersection}, path $z_1$ will pass along 4 conflict points where the last conflict point is $n_1$. Note that conflict point $n_1$ is the first conflict point for the path $z_2$, i.e., $n_1$ will be the first element of the set $\mathcal{O}_{z_2}.$ It is important to note that these points remain fixed for any state within the control zone. Finally, let $L_z^n$ be the distance of conflict point $n$ from the entry of the control zone for path $z$. 

\begin{figure*}
    \centering
    \begin{subfigure}[b]{0.622\textwidth}
        \includegraphics[width=\textwidth]{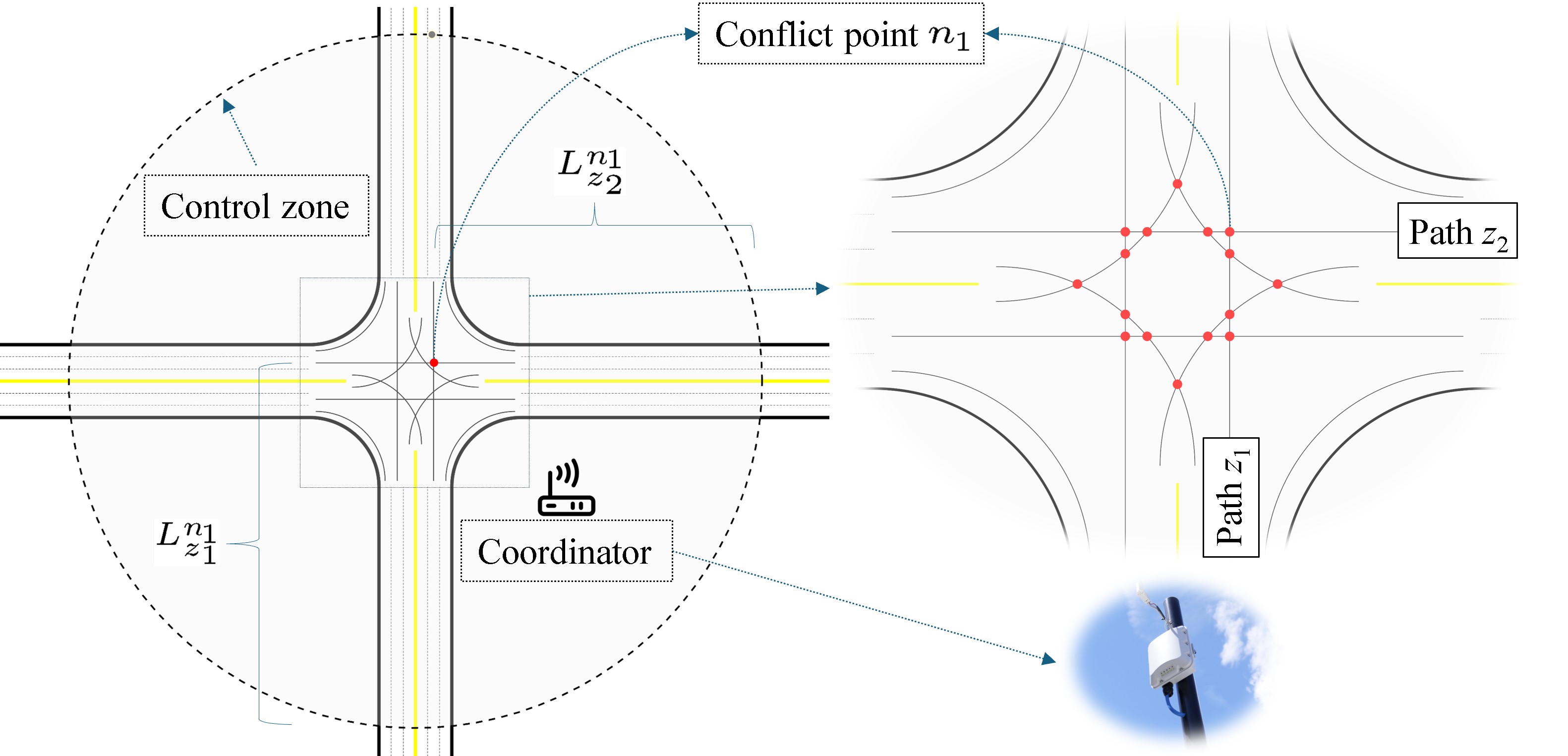}
        \caption{Illustration of a signal-free intersection along with the control zone, conflict points, and coordinator.}
        \label{fig:Intersection}
    \end{subfigure}%
    \hspace{4mm} 
    \parbox[b]{0.35\textwidth}{
        \centering
        \begin{subfigure}[b]{0.26\textwidth}
            \includegraphics[width=\textwidth]{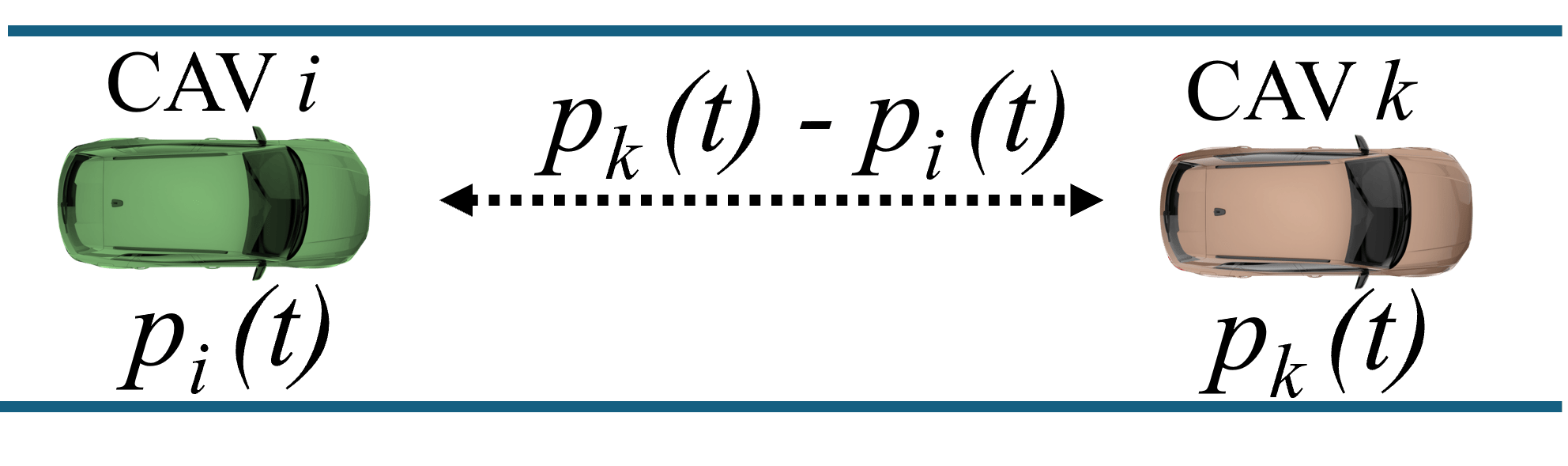}
            \caption{Illustration of rear-end constraint.}
            \label{fig:rear_end1}
        \end{subfigure}
        \vspace{2mm} 
        \begin{subfigure}[b]{0.26\textwidth}
            \includegraphics[width=\textwidth]{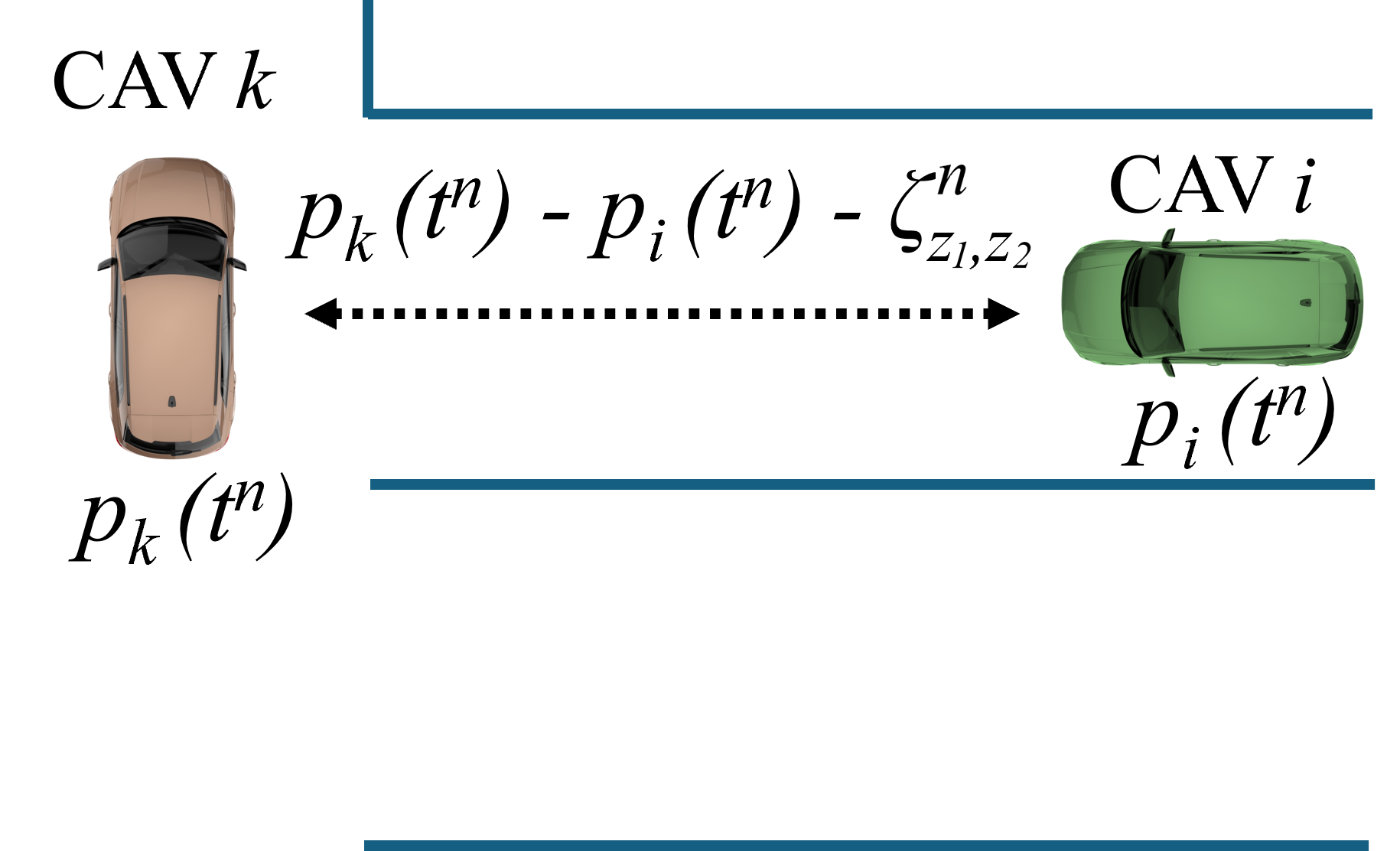}
            \caption{Illustration of lateral constraint.}
            \label{fig:lateral}
        \end{subfigure}
    }
    \caption{Set up of intersection and illustration of rear-end and lateral constraints.}
\end{figure*}

Before we present our dynamics, we define two assumptions that we consider throughout our exposition.

\begin{itemize}
    \item Assumption 1: Tracking controllers are available at the low-level control layer of CAVs to ensure reference speed tracking and steering for lane keeping. 
    \item Assumption 2: Communication noise and delay between CAVs and traffic lights is negligible.
\end{itemize}

We impose Assumption 1 to focus explicitly on higher-level decision-making, though deviations could easily be incorporated if needed; see \cite{ChalakiCBF2022}. Assumption 2 can be similarly relaxed by introducing communication noise, as demonstrated in previous work \cite{chalaki2021RobustGP}.

\subsection{Vehicle Dynamics}
\vspace{-5pt}
We model the dynamics of each CAV as a double-integrator 
\begin{equation}\label{eq:model2}
\begin{split}
\dot{p}_{i}(t) &= v_{i}(t), 
\\
\dot{v}_{i}(t) &= u_{i}(t), 
\end{split} 
\end{equation}
where ${p_{i}\in\mathcal{P}}$, ${v_{i}\in\mathcal{V}}$, and
${u_{i}\in\mathcal{U}}$ denote the position of the rear bumper from the entry of the control zone, speed, and control input (acceleration/deceleration) of the CAV $i$, respectively. Note that we omit the independent variable $t$ from our notation when it does not lead to ambiguity. 
\noindent Next, let $t_i^0$ be the time a CAV~$i$ enters the control zone and $t_i^f$ the time it exits the control zone. Also, let $t_i^n$ be the time CAV $i$ reaches conflict point $n \in \mathcal{O}_z $ and $p_i^n$ be the position of CAV $i\in \mathcal{N}$ at conflict point $n$. Let $p_i^0$ and $p_i^f$ be the position of the vehicle at the entry and the end of the control zone, respectively. Without loss of generality we consider $p_i^0=0$.

In the following subsection, we introduce all the constraints that can influence the safe operation of the CAVs. 

\subsubsection{Safety constraints}
To prevent any possible rear-end collision between two consecutive vehicles $i\in\mathcal{N}$ and $k\in \mathcal{N}\backslash \{i\}$ where $k$ is the preceding one, we consider the following constraint:
\begin{equation}\label{eq:rear-end}
    p_k(t)-p_i(t)\geq \delta_i(t) = \phi v_i(t) + \gamma,
\end{equation}
\noindent where $\delta_i(t)$ is the safe-dependent distance, while $\phi$ and $\gamma$ are the reaction time and standstill distance, respectively. To visualize the constraint see Fig \ref{fig:rear_end1}.

We also impose lateral constraints on CAVs traveling on intersecting paths. For instance, in Fig. \ref{fig:Intersection}, paths $z_1$ and $z_2$ intersect at conflict point $n_1$, requiring the corresponding CAVs to avoid a collision at this point. To rigorously define this constraint, we consider a CAV $k\in \mathcal{N}$ on path $z_i\in\mathcal{G}$ that might cause a collision with CAV $i\in \mathcal{N}$ on path $z_j\in \mathcal{G}$ at conflict point $n$. Then, at time $t^n = \text{min}(t_k^n, t_i^n),$ (\( t^n \) represents the time at which CAV \( i \) or CAV \( k \) is planned to first cross conflict point \( n \)) we obtain the lateral constraint: 
\begin{equation}
| p_k(t^n) - p_i(t^n) | - \zeta_{z_\iota, z_j}^n \geq \phi v_i(t^n) + \gamma, \; \; \;
\end{equation}
which guarantees that once a CAV reaches a conflict point $n$, a potentially conflicting CAV will have a distance from the conflict point greater than or equal to $\phi v_i(t^n) + \gamma$. Note that the distance of a conflict point from the entries of two conflicting paths in the control zone might not be the same. Since the function \( p_i(t) \) tracks the distance from the control zone entry, this difference must be considered. Thus, we introduce the variable $\zeta_{z_\iota, z_j}^n,\;z_\iota,z_j \in \mathcal{G}$ that denotes the relative distance difference of a conflict point $n$ from the entries of two conflicting paths $z_\iota$ and $z_j$, in the control zone. For example in Fig. \ref{fig:Intersection}, for the conflicting paths $z_1$ and $z_2$ with conflict point $n_1$, the value of $\zeta_{z_1, z_2}^{n_1}$ is equal to $|L_{z_1}^{n_1}-L_{z_2}^{n_1}|$, where $L_{z_1}^{n_1} \neq L_{z_2}^{n_1}$. For an additional visualization see Fig \ref{fig:lateral} where a CAV $k$ crosses a conflict point before CAV $i$. Also, note that for each combination of intersecting paths we have only one conflict point in common and we can just use $\zeta_{z_\iota, z_i}$. Given that $\zeta_{z_\iota, z_j}$ is a constant we can also incorporate $\zeta_{z_\iota, z_j}$ into $\gamma$  and obtain:
\begin{equation}\label{eq:lateral}
    | p_k(t^n) - p_i(t^n) | \geq \phi v_i(t^n) + \tilde{\gamma}, \; \; \; t^n = \text{min}(t_k^n, t_i^n),
\end{equation}
where $\tilde{\gamma}=\gamma+\zeta_{z_\iota, z_j}$. Observe that although $\tilde{\gamma}$ is a function of $z_i$ and $z_j$, we omit writing it as $\tilde{\gamma}(z_\iota, z_j)$ (with a slight abuse of notation) for brevity, as this does not lead to any ambiguity. Finally, note that the absolute value is used in \eqref{eq:lateral} because the crossing sequence of CAVs $i$ and $k$ is not predetermined. This applies thanks to the absence of a strict sequence rule.


We also consider the speed limits and the physical
acceleration and braking limits of the vehicles. Consequently, for each CAV $i\in\mathcal{N}$ we impose
\begingroup
\setlength{\belowdisplayskip}{10pt} 
\setlength{\abovedisplayskip}{10pt} 
\begin{align}\label{eq:speed}
    v_{\text{min}}\leq v_i(t)\leq v_{\text{max}}, \quad
     u_{i,\text{min}} \leq u_i(t) \leq u_{i,\text{max}}. 
\end{align}
Next, we define the communication protocol inside the control zone and review the control framework presented by \cite{Malikopoulos2020}.

\subsection{Communication protocol within the control zone}
Once CAV $i$ enters the control zone, information can be exchanged with the coordinator and other CAVs. Initially, the CAV shares the path $z\in \mathcal{G}$ on which it operates with the coordinator, and at the same time, it receives the tuple $\mathcal{I}_i(t)=\{\mathcal{C}_z,S\}$. Here, $\mathcal{C}_z$ includes the time intervals when path $z$ makes any of the rear-end or lateral constraints active, and $S$ constitutes the speed limit of the intersection at time $t$. After the CAV has defined its trajectory based on $\mathcal{I}_i(t)$, it shares the trajectory with the coordinator. In case where two CAVs enter the control zone simultaneously, we consider that the order of trajectory planning happens with equal probability. Finally, note that the communication with the coordinator can be executed more than once while a CAV is in the control zone. Especially, in the case of resequencing-replanning; see \citet{chalaki2021Reseq}, the vehicles need to redefine their reference trajectories while they are inside the control zone, thus requiring access to the coordinator more than once. A resequencing-replanning mechanism is discussed in detail in Section 4.4.   

\section{Low-level controller design for CAVs}
In this section, we review the low-level controller introduced by \cite{Malikopoulos2020}, which is used in our framework to derive optimal unconstrained trajectories for the CAVs crossing the intersection such that none of the state, control, and safety constraints are active.
\subsection{Definition of low-level controller}

The goal of each CAV is to minimize its travel time while minimizing its control input (acceleration). These objectives aim to maximize throughput at the intersection while indirectly minimizing fuel consumption. To achieve this, we formulate two optimal control problems: (1) for each CAV $i$, an upper-level optimal control problem identifies the exit time from the control zone (\textit{time-optimal control problem}), which is then passed as an input to (2) a low-level optimal control problem that yields the control input trajectory that minimizes fuel consumption (\textit{energy-optimal control problem}). The low-level optimization problem is formulated as follows:

\textbf{Low-level (energy-optimal) control problem:} \label{prb:ocp-1}
\begin{equation}
\begin{aligned}
\label{eq:energy_cost}
&\underset{u_i\in\mathcal{U}}{\min} \quad \frac{1}{2} \int_{t^{0}_{i}}^{t_i^f} u^2_i(t) \, \mathrm{d}t, \\
&\text{subject to:} \quad \eqref{eq:model2}, \eqref{eq:rear-end}, \eqref{eq:lateral}, \eqref{eq:speed},\\
&\text{given:} \quad p_i (t_i^0) = p_i^0, \,\, v_i (t_i^0) = v_i^0, 
\,\, p_i (t_i^f) = p_i^f,
\end{aligned}
\end{equation}
where $t_i^f$ is computed by the upper-level optimization problem, discussed next. This problem can be solved analytically by utilizing the \textit{Pontryagin's minimum principle} as presented by \cite{Malikopoulos2020}. The optimal unconstrained trajectories are 
\begin{equation}\label{eq:optimalTrajectory}
\begin{split}
u_i(t) &= 6 \phi_{i,3} t + 2 \phi_{i,2}, \\
v_i(t) &= 3 \phi_{i,3} t^2 + 2 \phi_{i,2} t + \phi_{i,1}, 
\\
p_i(t) &= \phi_{i,3} t^3 + \phi_{i,2} t^2 + \phi_{i,1} t + \phi_{i,0},
\end{split}
\end{equation} 
where $\phi_{i,3}, \phi_{i,2}, \phi_{i,1}, \phi_{i,0}$ are constants of integration and can be found using the initial and terminal conditions along the optimal terminal condition $u_i(t_i^f)=0$. \\ \\
\textbf{Upper-level (time-optimal) control problem:}
At time \( t_i^0 \), when CAV enters the control zone, let $ \mathcal{F}_i(t_i^0) = [\underline{t}_i^f, \overline{t}_i^f] $ be the feasible range of travel time under the speed and input constraints of CAV \( i \) computed at \( t_i^0 \). The values \( \underline{t}_i^f \) and \( \overline{t}_i^f \) correspond to the time that the CAV exits the control zone with the maximum and minimum acceleration subject to \eqref{eq:optimalTrajectory}, respectively. The complete analysis of calculating $ \mathcal{F}_i(t_i^0) = [\underline{t}_i^f, \overline{t}_i^f] $ was defined by \cite{chalaki2021CSM} (p. 27).  Then, CAV \( i \) must solve the following time-optimal control problem to find the minimum exit time \( t_i^f \in \mathcal{F}_i(t_i^0) \) that satisfies all the constraints,
\begin{align}\label{Time_optimal}
&\underset{t_i^f \in \mathcal{F}_i(t_i^0)}{\min} \quad t_i^f  \\
&\text{subject to:} \quad \eqref{eq:model2}, \eqref{eq:rear-end}, \eqref{eq:lateral}, \eqref{eq:speed}, \eqref{eq:optimalTrajectory} \nonumber \\
&\text{given:} \quad p_i(t_i^0) = p_i^0, \, v_i(t_i^0) = v_i^0, \, p_i(t_i^f) = p_i^f, \, u_i^*(t_i^f) = 0. \nonumber 
\end{align}
\noindent The boundary conditions in \eqref{Time_optimal} are set at the entry and exit of the control zone, and $u_i^*(t_i^f) = 0$ denotes the optimal acceleration at $t_i^f$. 

The two optimal control problems aim to find an unconstrained trajectory using the following procedure. Initially, the time $t_i^f=\underline{t}_i^f$ is set, and then \eqref{eq:energy_cost} is solved. If none of the constraints \eqref{eq:rear-end}, \eqref{eq:lateral}, \eqref{eq:speed} is violated, the solution is given by \eqref{eq:optimalTrajectory}. Otherwise, we set $t_i^f = t_i^f + \Delta t$, where $\Delta t$ is a timestep, and we check again if a constraint is violated. We continue this process until no constraint is violated.  For a visual demonstration of how this process works, the reader can refer to the paper's website. 
\begin{remark}
The process described above allows us to determine an optimal unconstrained trajectory for each CAV $i$. However, in congested environments, there may be cases where an unconstrained trajectory is not feasible within $ \mathcal{F}_i(t_i^0) = [\underline{t}_i^f, \overline{t}_i^f] $ and the need of piecing together arcs is inevitable if we want to guarantee an optimal solution. An approach to circumvent the computational complexities associated with this process was given by \cite{xiao2021bridging} and \cite{sabouni2024optimal}, who attempted to define trajectories that satisfy all the constraints using CBFs, with the disadvantage of sacrificing optimality. In this paper we build upon their work, and we also extend their approach, which assumed a standstill distance of zero between consecutive vehicles to efficiently define lateral constraints. In the next subsection, we will relax this constraint.
\end{remark}
\subsection{Certificates for constraint satisfaction.}
In this subsection, we associate the constraints presented in our problem formulation with barrier conditions (or equivalently CBF conditions) that are linear in the control input.  The key idea is that these barrier conditions ensure the forward invariance property of the constraint set. That is, if a constraint set $C$ defines the feasible region for the system states, the corresponding barrier conditions enforce that, for any initial condition within $C$, the system remains within $C$ for all future time steps. Let us recall the definition of the forward invariance property.

\begin{definition}
\textbf{(\cite{khalil1992nonlinear})}. A set $C \subset \mathbb{R}^n$ is forward invariant for a system if the solutions of the trajectories starting at any $\mathbf{x}(0) \in C$ satisfy $\mathbf{x}(t) \in C, \forall t \geq 0$.
\end{definition}

Thus, our goal is to define barrier conditions that satisfy the invariance property and then solve an optimization problem that minimizes the deviation of the control input from a reference (optimal) control input, subject to these barrier conditions. Note that since these barrier conditions are linear in the control input, we obtain a QP problem that can be executed in real time. We now define these barrier conditions.

To maintain consistency with the literature in this area, we generalize our system dynamics in the vector form:
\begin{align}
\textbf{x}_i(t) = f(\textbf{x}_i(t)) + g(\textbf{x}_i(t))\textbf{u}_i  
\end{align}
where, $\textbf{x}_i(t)=[p_i(t),v_i(t)]^T$, $f(\textbf{x}_i(t))=[v_i(t),0]^T$ and $g(\textbf{x}_i(t))=[0,1]^T$. Note that $f$ denotes the drift vector field, and $g$ denotes the control vector field. Next, we need to transform each of our constraints \eqref{eq:rear-end}, \eqref{eq:lateral} and \eqref{eq:speed} into the CBF form $b_q(\textbf{x}(t))\geq0$ where $q \in \{1,...,c\}$, and $c\in\mathbb{N}$ stands for the number of constraints. As it was shown by \cite{ames2016control,ames2019control}, after transforming each of our constraints in the form $b_q(\textbf{x}(t))\geq0$, we can define a new equivalent constraint that we call \textit{barrier condition} which is linear in the control input and satisfies the forward invariance property. This barrier condition is defined as follows: 
\begin{equation}    \mathcal{L}_fb_q(\textbf{x}(t))+\mathcal{L}_gb_q(\textbf{x}(t))u(t)+\alpha(b_q(\textbf{x}(t)))\geq 0,
    \label{CBF constraint}
\end{equation}
where  the symbol $\mathcal{L}$ stands for the Lie derivatives. The inequality \eqref{CBF constraint} constitutes the certificate associated with the constraint $b_q(\textbf{x}(t))\geq0$. 
The Lie derivative of a function $b(\textbf{x})$ with respect to the vector field $f$ is given by $\mathcal{L}_fb(\textbf{x})=\langle \nabla b(\textbf{x}), f(\textbf{x})  \rangle$ where $\nabla b(\textbf{x})$ is the gradient of the function $b(\textbf{x})$, i.e., the vector of partial derivatives of $b(\textbf{x})$ with respect to the state variables. Finally, $\alpha$ stands for the class $\mathcal{K}$ function which is chosen here as the identity function. Let us recall the definition of a class $\mathcal{K}$ function.

\begin{definition}
\textbf{(\cite{khalil1992nonlinear})}. A continuous function $\alpha: [0, \eta) \to [0, \infty)$, $\eta > 0$ is said to belong to class $\mathcal{K}$ if it is strictly increasing and $\alpha(0) = 0$.
\end{definition}

Next, we transform each of our constraints in the form $b_q(\textbf{x}(t))\geq0$ and define the associated barrier conditions based on \eqref{CBF constraint}. We begin with the rear-end constraint. Recall that the rear-end safety constraint is $p_k(t)-p_i(t)\geq \phi v_i(t) + \gamma$ that can be written as $b_1(\textbf{x}_i(t))=p_k(t)-p_i(t)-\gamma -\phi v_i(t) \geq 0$. Then, according to \eqref{CBF constraint} we obtain
\begin{equation}
\underbrace{v_k(t) - v_i(t)}_{\mathcal{L}_{f}b_1} + \underbrace{ - \phi}_{\mathcal{L}_{g}b_1} u_i(t) + \underbrace{p_k(t) - p_i(t) - \gamma - \phi v_i(t)}_{\alpha (b_1)=b_1} \geq 0.
    \label{CBFrearend1}
\end{equation}
Here $\alpha (b_1)=b_1$ confirms that the class $\mathcal{K}$ function is the identity function, as previously discussed. Similarly, we obtain sufficient barrier conditions for the speed limits of the vehicles. Note that the limits related to the speed of the vehicles can be written as $b_2(\textbf{x}_i(t))=v_{\text{max}}-v_i(t)\geq 0$ and $b_3(\textbf{x}_i(t))=v_i(t)-v_{\text{min}} \geq 0$. Thus, the barrier conditions associated with the speed limits are
\begin{align} 
    u_i(t) &\leq v_{\text{max}}-v_i(t), \label{Cert_acc_1}\\
     u_i(t) &\geq -v_i(t)+v_{\text{min}} \label{Cert_acc_2}. 
\end{align}
For the lateral constraints, there is a technical difficulty in deriving sufficient barrier conditions since these constraints apply only at conflict points and not continuously. To make that clearer recall that in \eqref{eq:lateral} we only evaluate the constraint at the moment $t^n$  where the first of two conflicting CAVs arrives at a conflicting point. However, CBFs require a continuous differentiable form for each constraint; see \cite{ames2016control}. Therefore, we must effectively transform constraint \eqref{eq:lateral} into a continuous differentiable format. \cite{sabouni2024optimal} suggested using a reaction-time-based approach, by creating a function $\Phi(p_i(t))$ that smoothly outputs the reaction time $\phi$ based on the position $p_i(t)$. This function operates over a time interval $[t_i^0, t_i^n]$, during which CAV $i$ travels from the entry of the control zone until the corresponding conflict point $n$ (or merging point as defined by \cite{sabouni2024optimal}). They set initial and terminal conditions as $\Phi(p_i(t_i^0))=0$ and $\Phi(p_i(t_i^n))=\phi$. Using these conditions, the lateral constraint was expressed in a continuous form as 
\begin{equation} \label{reaction_based_lateral}
  p_k(t) - p_i(t) \geq \Phi(p_i(t))v_i(t) + \tilde{\gamma} \;\forall t \in [t_i^0, t_i^n],  
\end{equation} which implies that at $t = t_i^0$, \eqref{reaction_based_lateral} yields $p_k(t) - p_k(t) \geq0 \; ,\text{if} \;\tilde{\gamma} = 0$. And at $t = t_i^n$, \eqref{reaction_based_lateral} yields $p_k(t) - p_i(t) \geq \phi v_i(t) + \tilde{\gamma}$. Thus, at the time $t_i^n$, the function $\Phi(p_i(t))$ results in the same constraint as before the transformation. \cite{sabouni2024optimal} used the function $\Phi(p_i(t)) = \frac{\phi p_i(t)}{L}$ to meet these requirements, where $L$ was the distance from the entry point of the control zone to the conflict point (in our case $L_z^n$). However, note that if $\tilde{\gamma} \neq 0$, this function does not yield the desired result at $t = t_i^0$. Especially at $t = t_i^0$, for $\tilde{\gamma}>0$, the lateral constraint becomes $p_k(t) - p_i(t) \geq \tilde{\gamma} > 0 $ , meaning that when CAV $k$ arrives at the control zone, CAV $i$ should be positioned $\tilde{\gamma}$ meters away from the entrance point and vice versa. This assumption imposes strict timing on vehicle arrivals.
To address this limitation, in the following Lemma we introduce a new format for the function $\Phi(p_i(t))$ that accommodates any value of $\tilde{\gamma}$ and thus relaxes the assumption on entrance vehicle spacing.
\begingroup   
\begin{lemma} \label{extension lemma}
       Consider a CAV $i\in \mathcal{N}$ operating on a path $z\in \mathcal{G}$.  Let $\Phi(p_i(t))$ be a continuously differentiable function satisfying the boundary conditions $\Phi(p_i(t_i^0)) = -\frac{\tilde{\gamma}}{v_i^0}$ and $\Phi(p_i(t_i^n)) = \phi$, where $p_i(t_i^0)=0$ is the position of the vehicle at the initial time $t_i^0$ and $p_i(t_i^n) = L_z^n$ is the position at time $t_i^n$, when CAV $i$ will reach conflict point $n$. Also $\tilde{\gamma}$, $\phi$, and $v_i^0$ are constants. Then 
        \begin{equation}
        \Phi(p_i(t)) = \frac{\phi + \frac{\tilde{\gamma}}{v_i^0}}{L_z^n}p_i(t) - \frac{\tilde{\gamma}}{v_i^0},
        \end{equation}    
         satisfies the boundary conditions, and the right-hand side of \eqref{eq:lateral} at $t=t_0$ is $\Phi(p_i(t_0)) v_i(t_0) + \tilde{\gamma} = 0 ;\forall \tilde{\gamma} \in \mathbb{R}^+.$
\end{lemma}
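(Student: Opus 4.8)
Since the statement asserts that an explicitly given affine function meets three conditions, the plan is verification rather than construction: I would check, in turn, (i) the regularity requirement, (ii) the two boundary conditions, and (iii) the claimed value of the right-hand side of \eqref{eq:lateral} at the entry time. The starting observation is that the proposed $\Phi$ is affine in $p_i(t)$, of the form $\Phi(p)=a\,p+b$ with $a=\bigl(\phi+\tilde{\gamma}/v_i^0\bigr)/L_z^n$ and $b=-\tilde{\gamma}/v_i^0$; being a polynomial in $p_i(t)$ it is trivially continuously differentiable, which immediately dispatches the continuity hypothesis needed for the CBF machinery invoked around \eqref{reaction_based_lateral}.

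Next I would substitute the two endpoint positions. Evaluating at $p_i(t_i^0)=0$ gives $\Phi(0)=b=-\tilde{\gamma}/v_i^0$, matching the first boundary condition, and evaluating at $p_i(t_i^n)=L_z^n$ gives $\Phi(L_z^n)=aL_z^n+b=\bigl(\phi+\tilde{\gamma}/v_i^0\bigr)-\tilde{\gamma}/v_i^0=\phi$, matching the second. One can equally well motivate the closed form by noting that the affine ansatz carries exactly two free coefficients $a,b$, pinned down uniquely by the two boundary conditions through a $2\times 2$ linear system; solving it reproduces the stated expression. For the final claim I would use the entry data $p_i(t_i^0)=0$ and $v_i(t_i^0)=v_i^0$ recorded with \eqref{eq:energy_cost}, so that the right-hand side of \eqref{eq:lateral} at $t=t_i^0$ becomes $\Phi(p_i(t_i^0))\,v_i(t_i^0)+\tilde{\gamma}=\bigl(-\tilde{\gamma}/v_i^0\bigr)v_i^0+\tilde{\gamma}=0$, valid for every $\tilde{\gamma}\in\mathbb{R}^+$ (indeed for all real $\tilde{\gamma}$).

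There is no genuine mathematical obstacle here; the only points deserving a moment's care are that the constant $-\tilde{\gamma}/v_i^0$ be well defined, which holds because a CAV entering the control zone satisfies $v_i^0\geq v_{\text{min}}>0$ by \eqref{eq:speed}, and that the cancellation at $t_i^0$ relies on the entry speed equaling precisely the $v_i^0$ appearing in the boundary value of $\Phi$. The conceptual content, which I would emphasize at the close, is exactly that choosing $\Phi(p_i(t_i^0))=-\tilde{\gamma}/v_i^0$ instead of the value $0$ used by \cite{sabouni2024optimal} absorbs the nonzero standstill term $\tilde{\gamma}$ at the entry, so that \eqref{reaction_based_lateral} reduces to $p_k-p_i\geq 0$ at $t_i^0$ and thereby removes the artificial arrival-spacing requirement discussed immediately before the lemma.
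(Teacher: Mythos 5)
Your proof is correct and takes essentially the same approach as the paper: the paper posits the affine ansatz $\Phi(p_i(t)) = a\,p_i(t) + b$, pins down $a$ and $b$ from the two boundary conditions, and then checks the cancellation $\left(-\tilde{\gamma}/v_i^0\right)v_i^0 + \tilde{\gamma} = 0$ at $t=t_i^0$ --- exactly the computation you perform, merely in verification order rather than construction order. Your additional remark that $v_i^0 \geq v_{\text{min}} > 0$ is needed for $-\tilde{\gamma}/v_i^0$ to be well defined is a small but sound refinement that the paper leaves implicit.
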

\endgroup
\begin{proof}
    Given that we seek to identify a function to eliminate the term $\tilde{\gamma}$ at $t=t_i^0$ regardless of its value in \eqref{eq:lateral}, we choose $\Phi$ such as  $\Phi(p_i(t)) = a p_i(t) + b$. From $\Phi(p_i(t_i^0)) = -\frac{\tilde{\gamma}}{v_i^0}$,  we obtain 
    $-\frac{\tilde{\gamma}}{v_i^0} = a p_i(t_i^0) + b$, hence $b = -\frac{\tilde{\gamma}}{v_i^0} - a p_i(t_i^0).$
    From $\Phi(p_i(t_i^n)) = \phi$, we obtain
    $\phi = a L_z^n + b.$
    Substituting $b$ into the last equation equation, we obtain
    $\phi = a L_z^n - \frac{\tilde{\gamma}}{v_i^0} - a p_i(t_i^0)$; hence,
    $
    a = \frac{\phi + \frac{\tilde{\gamma}}{v_i^0}}{L_z^n - p_i(t_i^0)}.
    $
    Substituting $a$ and $b$ into $\Phi(p_i(t))$, given $p_i(t_i^0)=0$ , $p_i(t_i^n) = L_z^n$  and simplifying yields
    $
    \Phi(p_i(t)) = \frac{\phi + \frac{\tilde{\gamma}}{v_i^0}}{L_z^n} p_i(t) - \frac{\tilde{\gamma}}{v_i^0}.
    $ Thus lateral constraint at $t=t_i^0$ becomes 
    $p_k(t)-p_i(t)\geq \frac{\tilde{\gamma}}{v_i^0}v_i^0 - \tilde{\gamma} = 0 \; \forall \tilde{\gamma} \in \mathbb{R}^+$, which completes the proof.
\end{proof}

At this juncture, one may substitute the function \( \Phi(p_i(t)) \) as defined in Lemma \eqref{extension lemma} with \( \phi \) in \eqref{eq:lateral}. This modification permits the articulation of the lateral constraint in a continuously differentiable format as follows:
\begin{align} \label{b_4_safe_set}
     b_4(\textbf{x}(t))& = |p_k(t) - p_i(t)| - \Phi(p_i(t)) v_i(t) -\tilde{\gamma}   \nonumber \\
    & = |p_k(t) - p_i(t)| - ( \frac{\phi + \frac{\tilde{\gamma}}{v_i^0}}{L_z^n} p_i(t) - \frac{\tilde{\gamma}}{v_i^0})v_i(t) -\tilde{\gamma} \geq 0. 
\end{align}

In the following result, we define our berrier condition for the lateral constraint, which imposes no restrictions on the vehicles' standstill distance.
\begin{theorem}
Consider CAV $i\in\mathcal{N}$ and CAV $k\in\mathcal N\backslash \{i\}$ operating at intersecting paths. Consider that CAV $k$ reaches first at conflict point $n$. The barrier condition associated with $\eqref{b_4_safe_set}$ that guarantees lateral collision avoidance between the two vehicles is given by
\begingroup
\begin{align} \label{Cert_lateral}
    & v_k(t)-v_i(t) - \frac{\phi + \frac{\tilde{\gamma}}{v_i^0}}{L_z^n}v_i(t)^2 - ( \frac{\phi + \frac{\tilde{\gamma}}{v_i^0}}{L_z^n} p_i(t) - \frac{\tilde{\gamma}}{v_i^0})u_i(t) \nonumber \\ 
    & + p_k(t) - p_i(t) - ( \frac{\phi + \frac{\tilde{\gamma}}{v_i^0}}{L_z^n} p_i(t) - \frac{\tilde{\gamma}}{v_i^0})v_i(t) -\tilde{\gamma}\geq 0.
\end{align}
In addition, \eqref{Cert_lateral} does not possess any restrictions on the standstill distance and at the arrival times of the vehicles. 
\end{theorem}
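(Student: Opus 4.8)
The plan is to derive \eqref{Cert_lateral} by applying the barrier-condition template \eqref{CBF constraint} directly to the continuously differentiable safe-set function $b_4$ of \eqref{b_4_safe_set}, exactly as was done for the rear-end constraint in \eqref{CBFrearend1}. First I would resolve the absolute value in $b_4$. Since, by hypothesis, CAV $k$ reaches the conflict point $n$ before CAV $i$, throughout the relevant planning interval $[t_i^0,t_i^n]$ the preceding vehicle is ahead in the shifted coordinate, so $|p_k(t)-p_i(t)| = p_k(t)-p_i(t)$. This lets me write $b_4(\textbf{x}(t)) = p_k(t)-p_i(t)-\Phi(p_i(t))v_i(t)-\tilde{\gamma}$ with $\Phi(p_i(t)) = \frac{\phi+\tilde{\gamma}/v_i^0}{L_z^n}p_i(t)-\frac{\tilde{\gamma}}{v_i^0}$ taken from Lemma~\ref{extension lemma}.

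Next, I would compute the two Lie derivatives required by \eqref{CBF constraint}, treating CAV $k$'s position as part of the state so that its drift contributes $\dot p_k = v_k$, precisely as in the $\mathcal{L}_f b_1$ term of \eqref{CBFrearend1}. The control term $\mathcal{L}_g b_4$ is the coefficient of $u_i$, namely $-\Phi(p_i(t)) = -\big(\frac{\phi+\tilde{\gamma}/v_i^0}{L_z^n}p_i(t)-\frac{\tilde{\gamma}}{v_i^0}\big)$. For the drift term $\mathcal{L}_f b_4$, the only genuinely nontrivial piece is differentiating the product $\Phi(p_i)v_i$ along $f$: the product rule gives $\Phi'(p_i)\,\dot p_i\, v_i + \Phi(p_i)\,\dot v_i$, and since $\Phi'(p_i) = \frac{\phi+\tilde{\gamma}/v_i^0}{L_z^n}$ is constant while $\dot p_i = v_i$ in the drift, the drift contribution of this product is $\frac{\phi+\tilde{\gamma}/v_i^0}{L_z^n}v_i^2$ (its $\dot v_i$ part is the control term already counted). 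Adjoining the drift contributions $v_k - v_i$ coming from $p_k - p_i$, I obtain $\mathcal{L}_f b_4 = v_k - v_i - \frac{\phi+\tilde{\gamma}/v_i^0}{L_z^n}v_i^2$. Setting $\alpha$ to the identity so that $\alpha(b_4)=b_4$ and assembling $\mathcal{L}_f b_4 + \mathcal{L}_g b_4\,u_i + b_4 \geq 0$ reproduces \eqref{Cert_lateral} term by term.

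For the second assertion, I would invoke Lemma~\ref{extension lemma} directly: with this choice of $\Phi$, the right-hand side of \eqref{eq:lateral} evaluated at $t=t_i^0$ equals $\Phi(p_i(t_i^0))v_i^0 + \tilde{\gamma} = -\frac{\tilde{\gamma}}{v_i^0}v_i^0 + \tilde{\gamma} = 0$ for every $\tilde{\gamma}\in\mathbb{R}^+$, so the entry-time constraint degenerates to $p_k - p_i \geq 0$ and imposes no spacing requirement at arrival, regardless of the standstill distance embedded in $\tilde{\gamma}$. Because \eqref{Cert_lateral} enforces forward invariance of $\{b_4 \geq 0\}$, the original lateral constraint carrying the full $\tilde{\gamma}$ (hence the nonzero $\gamma$) is still recovered at $t_i^n$, so collision avoidance is preserved while the standstill assumption is relaxed.

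I expect the main obstacle to be the bookkeeping around the absolute value and the role of $v_k$. Concretely, I must justify that the sign of $p_k - p_i$ does not flip on $[t_i^0,t_i^n]$, so that $b_4$ is genuinely continuously differentiable and the single-branch Lie-derivative computation is valid; and I must be explicit that $v_k$ enters $\mathcal{L}_f b_4$ because the preceding CAV's trajectory is treated as a known, differentiable state rather than an exogenous constant, mirroring the convention already used in \eqref{CBFrearend1}. Once these points are fixed, the algebraic collection of terms is routine.
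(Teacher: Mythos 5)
Your proposal is correct and follows essentially the same route as the paper: compute $\mathcal{L}_f b_4$ and $\mathcal{L}_g b_4$ for the safe-set function of \eqref{b_4_safe_set}, take the class $\mathcal{K}$ function as the identity, assemble \eqref{CBF constraint}, and invoke Lemma~\ref{extension lemma} for the claim about standstill distance and arrival times. Your treatment is in fact slightly more careful than the paper's, which silently drops the absolute value in $b_4$ when differentiating, whereas you justify the single-branch computation via the hypothesis that CAV $k$ crosses the conflict point first.
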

\endgroup
\begin{proof}
The Lie derivatives of $b_4(x(t))$ in \eqref{b_4_safe_set} with respect to the vector fields $f$ and $g$ are $ \mathcal{L}_fb_4 = v_k(t)-v_i(t) - \frac{\phi + \frac{\tilde{\gamma}}{v_i^0}}{L_{z_2}^n}v_i(t)^2 $ and $\mathcal{L}_gb_4 = - ( \frac{\phi + \frac{\tilde{\gamma}}{v_i^0}}{L_z^n}p_i(t) - \frac{\tilde{\gamma}}{v_i^0})u_i(t)$, respectively. Since \eqref{b_4_safe_set} is derived by Lemma~1, there are no restrictions to the standstill distance and vehicles' arrival time. By applying \eqref{CBF constraint}, the result follows. 
\end{proof}
\begingroup
Given all the barrier conditions, we obtain the following optimization problem:
\begin{align} \label{QP_problem}
\underset{u_i\in\mathcal{U}}{\min} &  \int_{t_i^0}^{t_i^f} \frac{1}{2}(u_i(t)-u_\textbf{ref}(t))^2 dt \\
    \text{subject to: }  &   \eqref{CBFrearend1},  \eqref{Cert_acc_1},  \eqref{Cert_acc_2}, \text{and} ~\eqref{Cert_lateral}. \nonumber
\end{align}
\endgroup
This optimization problem aims at minimizing the deviation of our control input $u_i(t)$ from a reference control trajectory $u_{i, \text{ref}}(t)$ in the time interval $[t_i^0, t_i^f]$. This reference control trajectory can be any optimal unconstrained trajectory. Note that \cite{xu2022general} and \cite{sabouni2024optimal} utilized the optimal unconstrained trajectory that jointly minimizes travel time and control input.  Defining such an unconstrained trajectory utilizes again the Pontryagin's minimum principle as in \eqref{eq:energy_cost} with the only difference of incorporating the final time $t_i^f$ as a decision variable. Finally, to solve \eqref{QP_problem}, we discretize the domain $[t_i^0, t_i^f]$ into multiple discrete time steps $[t_i^0,t_i^0+\Delta t, t_i^0+2\Delta t, ...,t_i^f] $. This allows us to evaluate the problem at discrete intervals. Since each constraint is linear in the control input, the resulting optimization problem is a QP, which can be efficiently solved in real time.

Up to this point, our analysis has focused solely on the operation of CAVs at signal-free intersections, without accounting for the presence of unexpected pedestrians. Specifically, in \textit{Section 3.2}, we discussed how to identify unconstrained trajectories, and subsequently presented an approach for defining trajectories based on the works of \cite{xu2022general} and \cite{sabouni2024optimal} when unconstrained trajectories are not feasible. Also we relaxed an assumption regarding the standstill distance of the vehicles. Following, we consider pedestrian's unexpected crossing.

\section{Case of unexpected event}
In this section, we consider the case of an unexpected event, such as a pedestrian disregarding traffic rules to cross the road. The high-level overview of the following analysis is as follows: once CAV $i \in \mathcal{N}$ detects the pedestrian, it broadcasts a critical event alert to all nearby CAVs. Those traveling on the same road as the pedestrian immediately activate an emergency mode, while the remaining CAVs at the intersection sequentially replan their reference trajectories to account for the altered paths of the CAVs in emergency mode. In this emergency mode, each CAV follows a trajectory that prioritizes pedestrian safety. 

In the next subsections, we initially discuss in detail the control strategy during emergency mode and derive our re-planning mechanism that allows all the CAVs to safely coordinate during the critical event.

\begin{remark}
When a CAV enters an emergency mode due to the detection of a pedestrian, it is crucial to incorporate its steering angle in our analysis. Thus, under emergency mode, we relax \textit{Assumption 1} regarding steering tracking and treat the heading angle of the CAV as a control input. This provides an extra degree of freedom, which can be critical. We account for the steering angle by considering the bicycle kinematic model.
\end{remark}

\begin{remark}
As shown in the videos available on the paper's website and discussed in the simulation results, we consider that each CAV is equipped with at least a LiDAR and a camera sensor, enabling it to detect and classify pedestrians with high accuracy. This is a reasonable assumption, commonly adopted in many autonomous vehicles to date; see \cite{wang2019pseudo}.
\end{remark}

\subsection{Bicycle kinematic model} The dynamics of the bicycle kinematic model are presented in \eqref{Bicycle model}. Here, $x$ and $y$ represent the longitudinal and lateral positions, measured at the midpoint of the rear axle while $v$ denotes the linear speed, $\theta$ the heading angle, $\delta$ the steering angle (first input), $u$ the acceleration (second input), and $\sigma$ the vehicle length.
\begin{figure}[h!]
    \centering
    \hspace{0.05\linewidth} 
    \begin{minipage}{0.45\linewidth}
        \centering
        \includegraphics[width=1\linewidth]{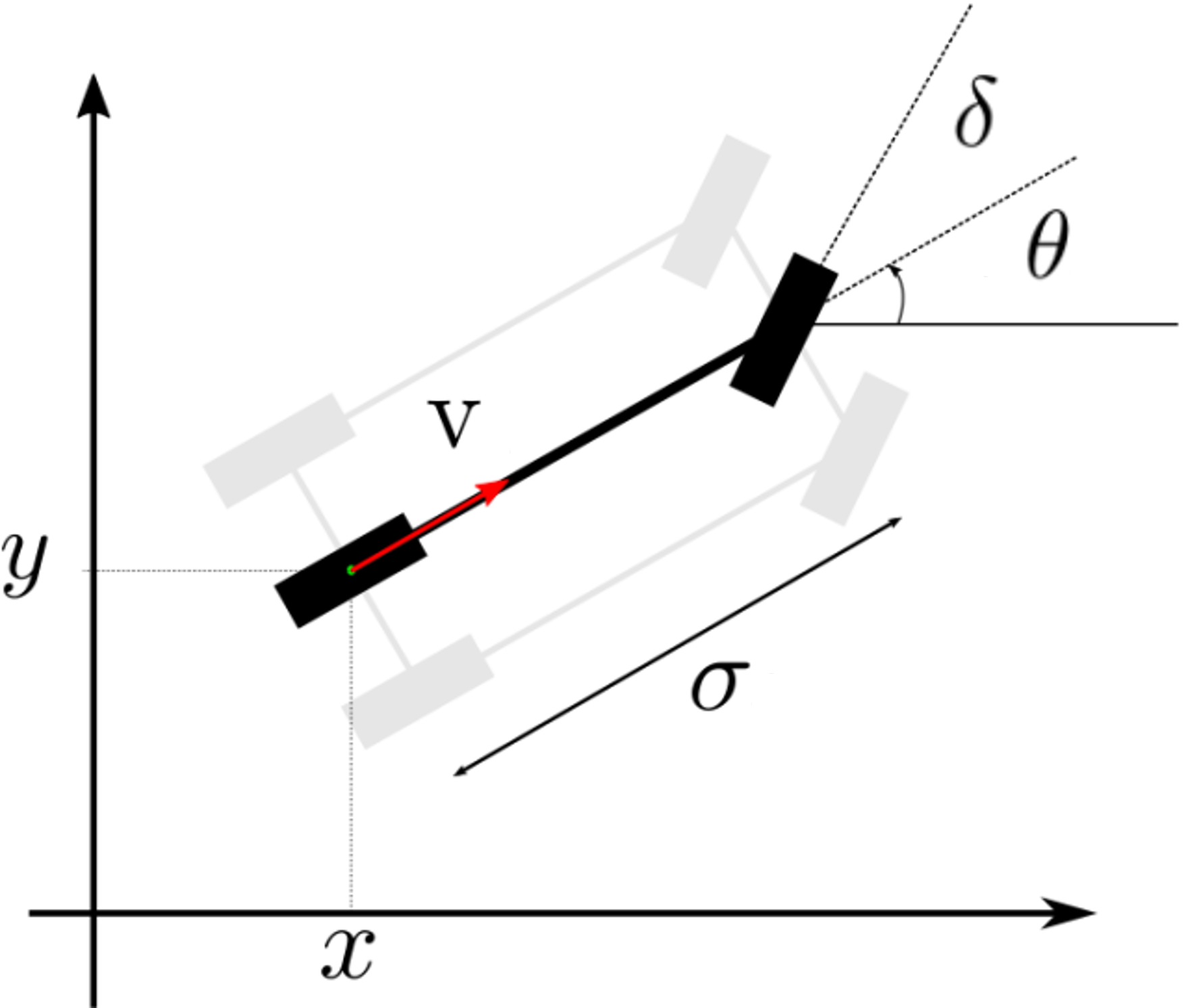}
        \label{bicycle_model}
    \end{minipage}%
    \hfill
    \raisebox{0.3\height}{ 
    \begin{minipage}{0.45\linewidth}
        \centering
        \begin{align}\label{Bicycle model}
        \dot{x} &= v \cos(\theta) \nonumber \\ 
        \dot{y} &= v \sin(\theta) \nonumber  \\ 
        \dot{\theta} &= \frac{v}{\sigma} \tan(\delta) \\ 
        \dot{v} &= u \nonumber
        \end{align}
    \end{minipage}
    }
\caption{Bicycle kinematic model and dynamics for CAV i.}
\end{figure}
However, the model in \eqref{Bicycle model} is not in control-affine form, (namely, it cannot be written as $\textbf{x} = f(\textbf{x}) + g(\textbf{x})\textbf{u}$), since the control input $\delta$ is an argument of the function $\tan(\cdot)$. To address this, we apply an input transformation as in  \cite{karafyllis2022lyapunov} to transform the model into  a control-affine form. Specifically, we set $u_1 = \tan(\delta)$, and we obtain $\delta = \tan^{-1}(u_1)$. For consistency, we also denote $\dot{v} = u_2$. This allows us to express the model as:
\begingroup
\setlength{\belowdisplayskip}{10pt} 
\setlength{\abovedisplayskip}{10pt} 
\begin{align} \label{bicycle_model_2}
    \dot{x} &= v \cos(\theta), \quad \dot{\theta} = \frac{v}{\sigma} u_1, \nonumber \\
    \dot{y} &= v \sin(\theta), \quad \dot{v} = u_2.
\end{align}
\endgroup
\noindent After this transformation, the new control inputs of our system are $u_1$ and $u_2$. To avoid ambiguity with the previous notation we denote the state vector associated with \eqref{bicycle_model_2} as $\textbf{x}_b=[x,y,\theta,v]$, where the subscript $b$ stands for the bicycle model. The new drift vector field is $f(\textbf{x}_b)= [v\text{cos}(\theta),v\text{sin}(\theta),0,0]$ and the new control vector field is $g(\textbf{x}_b)=[0,0,\frac{v}{\sigma}, 1]$. The system can now be written as 
\begin{equation}
\dot{\textbf{x}}_b = f(\textbf{x}_b) + g(\textbf{x}_b)\textbf{u}_b,    
\end{equation}where $\textbf{u}_b = [u_1, u_2]^T$.
\begingroup
\begin{remark}
Hereinafter, we must consider that an input transformation related to the steering angle $\delta$ was applied when deriving any constraints associated with $\delta$.
\end{remark}
In the next subsections, we first define the unsafe set associated with a pedestrian and construct a constraint linking this unsafe set to the coordinates of a CAV (\textit{4.2}). Next, we develop sufficient barrier conditions based on this constraint (\textit{4.3}), and we also enforce desired states according the mode of each CAV (\textit{4.4}). Finally, we present the QP problem that is followed by the CAVs when emergency mode is activated, along with the replanning mechanism adopted by the rest of CAVs (\textit{4.5}-\textit{4.6}).
\subsection{Unsafe set associated with the pedestrian}
While there are advanced models for human-motion prediction; see \cite{bajcsy2020robust}, we choose to develop a conservative, designer-friendly, model to define the unsafe set associated with the pedestrian, for two main reasons. First, no existing data reliably captures human behavior in a fully CAV-integrated transportation network. Second, when safety depends on human behavior, conservative approaches are preferable in the absence of complete certainty. We select the unsafe set to be defined as an ellipse centered around the pedestrian, with the pedestrian positioned at the rear half to reflect the natural dynamics of human motion. This arrangement considers that pedestrians typically move forward rather than backward while the elliptical shape accounts for limited lateral movements, which are generally less expected and physically challenging. The setup of the ellipse is shown in Fig. \ref{CAV_and_ellipse}.

\begin{figure}[h!]
    \centering
        \centering
        \includegraphics[width=0.83\linewidth]{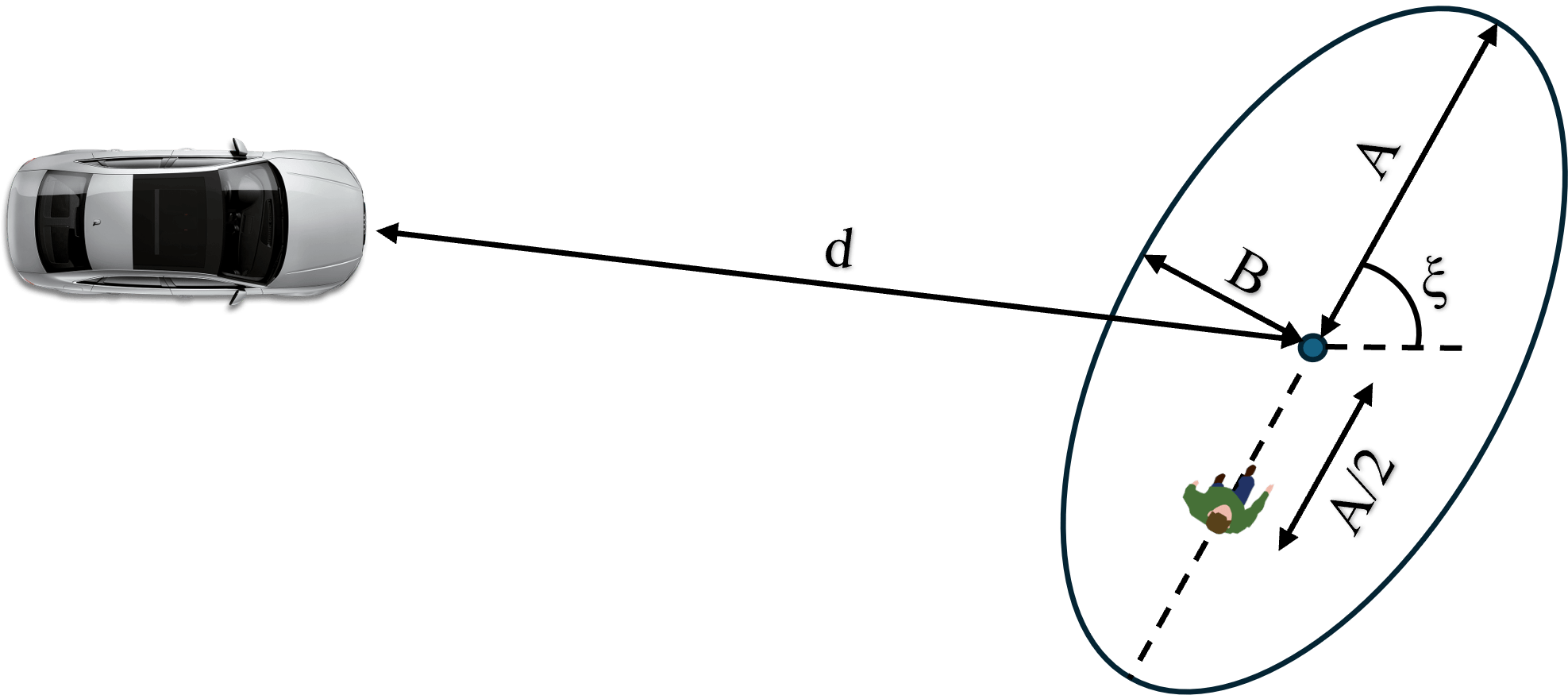}
        \label{bicycle_model}
\caption{Pedestrian surrounded by the ellipse.}
\label{CAV_and_ellipse}
\end{figure}

The range of the ellipse is selected to be a function of 1) pedestrians' speed, 2) pedestrians' orientation, 3) distance between the pedestrian and CAV, and 4) CAV's speed. We justify this selection for the following reasons: 1) the pedestrian's velocity can affect the dimensions of the unsafe set, as inertia causes higher velocities to correspond to a larger range of potential movement, 2) this potential movement is directly correlated to the pedestrian's orientation, 3) since predicting pedestrian intentions is more challenging at greater distances, it is reasonable that the ellipse's range increase with the distance to the CAV and 4) as a slow-moving CAV poses minimal risk, we want the range of the ellipse to be proportional to the vehicle’s speed.  Given these factors, we present the following metric that adjusts the ellipse's range according to the state of a CAV $i$:
\begingroup
\setlength{\belowdisplayskip}{10pt} 
\setlength{\abovedisplayskip}{10pt} 
\begin{align}
    &A(t) = \epsilon + \frac{v_{\text{ped}}(t)}{k_1}\frac{d_i(t)}{k_2}\frac{v_i(t)}{k_3}, \label{ellipse1} \\
    &B(t) = \frac{A(t)}{\lambda}, \label{ellipse2}
    \
\end{align}
\endgroup
where, $A(t)$ and $B(t)$ denote the magnitude of the major and minor axis respectively while $\epsilon$ denotes a safe standstill distance between the CAV and the pedestrian. Also, $v_{\text{ped}}(t)$ and $v_i(t)$ represent the velocities of the pedestrian and CAV $i$ at time $t$, respectively, and $d_i(t)$ is the Euclidean distance between them. The parameters $k_1$, $k_2$, $k_3$, and $\lambda$ are adjustable to set the model's conservativeness and the importance of each term. It is worth noting that if $\lambda = 1$, the shape becomes circular.  Finally, note that the units of speed and distance are excluded in \eqref{ellipse1} to avoid an uninterpretable unit for the ellipses' axes. 

Given the formula for defining the ellipse, we now need to construct the safety constraint between CAV and pedestrian accordingly. Since we also consider the orientation of the walking pedestrian we want a generalized constraint associated with a rotated ellipse that is not centered at the origin. The derivation of this constraint that represents the CBF between CAV and pedestrian is given in the next Theorem.

\begin{theorem}
Let the center of the elliptical unsafe set associated with the pedestrian be denoted by $(x_0, y_0) \in \mathbb{R}^2$, and let the pedestrian's orientation be given by the angle $\xi \in [-\pi, \pi)$. Consider a CAV and let $r_b > 0$ denote the distance from the CAV’s barycentric center to any of its corners. The CBF that encodes safety with respect to the pedestrian’s position and orientation is as follows:
\begin{align} 
    &b_5(\textbf{x}_b)=: \nonumber \\
    &\frac{\left((x +  \frac{\sigma}{2}\text{cos}(\theta) - x_0)\text{cos}(\xi) + (y +  \frac{\sigma}{2}\text{sin}(\theta) - y_0)\text{sin}(\xi)\right)^2}{A^2} + \nonumber \\
    &\frac{\left((x +  \frac{\sigma}{2}\text{cos}(\theta) - x_0)\text{sin}(\xi) - (y +  \frac{\sigma}{2}\text{sin}(\theta) - y_0)\text{cos}(\xi)\right)^2}{B^2} \nonumber \\
    &- 1 - r_b \geq 0 \label{CAV_Pedestrian_Constraint}
\end{align}
\end{theorem}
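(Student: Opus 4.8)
The plan is to build the certificate geometrically in four stages—canonical ellipse, translation, rotation, and body inflation—and then confirm the regularity needed for the later barrier construction. First I would start from the canonical axis-aligned ellipse centered at the origin, whose exterior (the safe region) is characterized by $\frac{X^2}{A^2}+\frac{Y^2}{B^2}\geq 1$, with $A$ and $B$ given by \eqref{ellipse1}--\eqref{ellipse2}. This is the baseline "point outside the ellipse" condition that all remaining steps specialize.

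Second, I would fix the reference point on the CAV. The bicycle model \eqref{bicycle_model_2} tracks $(x,y)$ at the rear-axle midpoint, but the ellipse should be measured against the vehicle's barycentric center, which sits a distance $\sigma/2$ ahead along the heading, at $\left(x+\frac{\sigma}{2}\cos\theta,\; y+\frac{\sigma}{2}\sin\theta\right)$. Translating so the pedestrian center $(x_0,y_0)$ becomes the origin gives the displacement components $\tilde{x}=x+\frac{\sigma}{2}\cos\theta-x_0$ and $\tilde{y}=y+\frac{\sigma}{2}\sin\theta-y_0$, which are exactly the grouped terms appearing in \eqref{CAV_Pedestrian_Constraint}.

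Third, I would handle the orientation. Since the ellipse is aligned with the pedestrian's heading $\xi$, I would express the displacement $(\tilde{x},\tilde{y})$ in the ellipse's principal-axis frame by rotating by $-\xi$, obtaining local coordinates $X'=\tilde{x}\cos\xi+\tilde{y}\sin\xi$ and $Y'=-\tilde{x}\sin\xi+\tilde{y}\cos\xi$. Substituting into the canonical exterior inequality and using $(Y')^2=(\tilde{x}\sin\xi-\tilde{y}\cos\xi)^2$ reproduces the first two normalized terms of $b_5(\textbf{x}_b)$ verbatim.

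Finally, the point condition must be upgraded to a whole-body condition, and this is the step I expect to be the main obstacle. Treating the CAV conservatively as the disk of radius $r_b$ circumscribing its footprint, I would require the center to lie outside the ellipse by that margin, which is encoded by subtracting $1+r_b$ and demanding $b_5\geq 0$. The delicate point is that subtracting $r_b$ from the \emph{dimensionless} normalized quadratic form is not literally the Minkowski inflation of the ellipse by a disk of radius $r_b$; a fully rigorous version would compare $b_5\geq 0$ against the set obtained by dilating the ellipse by $r_b$. I would argue conservativeness by noting that the normalized level sets $\frac{(X')^2}{A^2}+\frac{(Y')^2}{B^2}=c$ expand monotonically outward in $c$, so for the operating range of $A,B$ the shift $c\geq 1+r_b$ over-approximates the true inflated boundary and therefore certifies non-intersection of the circumscribing disk with the closed elliptical region. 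Checking that $b_5$ is continuously differentiable in $\textbf{x}_b$—so that the Lie-derivative barrier condition \eqref{CBF constraint} is well defined in the subsequent section—then completes the argument.
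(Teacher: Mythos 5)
Your proposal follows essentially the same route as the paper's proof: start from the canonical axis-aligned ellipse, translate so the pedestrian's center is the origin, rotate (passively) by $\xi$ into the principal-axis frame, evaluate the resulting quadratic form at the vehicle's barycentric center $\left(x+\tfrac{\sigma}{2}\cos\theta,\; y+\tfrac{\sigma}{2}\sin\theta\right)$, and finally enlarge the unsafe set by $r_b$ to cover the whole vehicle body. Your caveat on the last step is well placed and, if anything, more careful than the paper's own treatment: the paper likewise just subtracts the dimensionless $r_b$ from the normalized quadratic form (its proof text even writes $g(x_c,y_c)\geq 1-r_b$ while the theorem requires $g(x_c,y_c)\geq 1+r_b$), so the fact that this is not a genuine Minkowski inflation and is only conservative under a condition relating $r_b$ to the semi-axes $A,B$ is a gap shared by both arguments, which you at least acknowledge explicitly.
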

\begin{proof}
The canonical ellipse centered at the origin and aligned with the coordinate axes is
\begin{equation} \label{canonical ellipse}
\frac{x^{2}}{A^{2}}+\frac{y^{2}}{B^{2}}=1.
\end{equation}
To relocate the centre to $(x_{0},y_{0})$ we translate the coordinates by setting $\tilde{x}=x-x_{0}$ and $\tilde{y}=y-y_{0}$, which yields 
\begin{equation} \label{translated ellipse}
\frac{\tilde{x}^{2}}{A^{2}}+\frac{\tilde{y}^{2}}{B^{2}}=1.
\end{equation}
Next we express $(\tilde{x},\tilde{y})$ in a frame obtained by a passive, counter‑clockwise rotation through the angle~$\xi$.  Denoting the rotated coordinates by $(x',y')$, we have
\[
\begin{pmatrix}x'\\y'\end{pmatrix}
=
R(\xi)
\begin{pmatrix}\tilde{x}\\\tilde{y}\end{pmatrix},
\qquad 
R(\xi)=
\begin{pmatrix}
\cos(\xi)&\sin(\xi)\\
-\sin(\xi)&\cos(\xi)
\end{pmatrix}.
\]
Thus,
\begin{align}
&x'=(x-x_{0})\cos(\xi)+(y-y_{0})\sin(\xi), \nonumber \\
&y'=-(x-x_{0})\sin(\xi)+(y-y_{0})\cos(\xi).\nonumber
\end{align}
Substituting $(x',y')$ into \eqref{translated ellipse} yields
\begin{align}
    &\frac{((x-x_{0})\cos(\xi)+(y-y_{0})\sin(\xi))^{2}}{A^{2}}\nonumber\\
    &\quad+\frac{(-(x-x_{0})\sin(\xi)+(y-y_{0})\cos(\xi))^{2}}{B^{2}}=1\label{equation_of_the_ellipse}.
\end{align}
The equality in \eqref{equation_of_the_ellipse} describes the boundary $\partial\mathcal{E}$ of the ellipse, while the interior (the unsafe region) is 
\[
\mathcal{E}^{\mathrm{int}}=\{(x,y)\in\mathbb{R}^{2}:g(x,y)<1\},
\]
where $g(x,y)$ denotes the left‑hand side of~\eqref{equation_of_the_ellipse}.  Safety therefore requires $g(x,y)\ge 1$. Next, recall that $(x,y)$ denotes the rear‑axle reference point of a vehicle of length $\sigma$ and heading~$\theta$. By applying basic trigonometric principles its barycentric centre is calculated as
\[
x_{c}=x+\tfrac{\sigma}{2}\cos(\theta),\qquad
y_{c}=y+\tfrac{\sigma}{2}\sin(\theta).
\]
To guarantee that every point of the vehicle lies outside $\mathcal{E}^{\mathrm{int}}$, we augment the ellipse by the closed disc $D_{r_{b}}$ of radius~$r_{b}$. Thus the final constraint is
\[
g(x_{c},y_{c})\;\ge\;1-r_b,
\]
 where we deduct $r_{b}$ to ensure that the front, rear and corners of the vehicle remain outside the unsafe set at all times. That completes the proof.
\end{proof}

Thus far, we have derived the bicycle’s kinematic model, which is mathematically expressed in \eqref{bicycle_model_2}, and defined a metric to compute the pedestrian's unsafe set in real time, given by \eqref{ellipse1}-\eqref{ellipse2}. We then introduced CBF \eqref{CAV_Pedestrian_Constraint} linking the unsafe set to the full dimensions of the CAV. That is, we require the coordinates \((x, y)\) of each CAV to satisfy \eqref{CAV_Pedestrian_Constraint}. In the next subsection, we design sufficient barrier conditions based on \eqref{CAV_Pedestrian_Constraint} to ensure the safe operation of the CAVs when a pedestrian is detected.

\subsection{Derivation of barrier conditions}
In comparison to the analysis presented in Section 3, here, if we apply \eqref{CAV_Pedestrian_Constraint} to the inequality \eqref{CBF constraint} (as we did for constraints \eqref{CBFrearend1},  \eqref{Cert_acc_1}, \eqref{Cert_acc_2}, \eqref{Cert_lateral}),  we will not obtain a barrier condition including both the control actions $u_1,u_2$. To circumvent this issue, we introduce the concept of CBFs with higher-order relative degrees that will allow both control actions to arise in the final barrier condition. Before we apply this concept on \eqref{CAV_Pedestrian_Constraint}, we review some preliminary definitions as defined by \cite{xiao2021bridging}.
\begin{definition}
\textbf{(Relative Degree).} The relative degree of a differentiable function $b : \mathbb{R}^n \rightarrow \mathbb{R}$ with respect to some system dynamics, with state vector $\textbf{x}$, is the number of times it is differentiated along its dynamics until all the components of the control vector $\mathbf{u}$ explicitly show in the corresponding derivative.
\end{definition}
For a state vector $\textbf{x}$ and a constraint $b(\textbf{x}) \geq 0$, with relative degree $m$, $b : \mathbb{R}^n \rightarrow \mathbb{R}$, and $\psi_0(\textbf{x}) := b(\textbf{x})$, we define a sequence of functions $\psi_i : \mathbb{R}^n \rightarrow \mathbb{R}$, $i \in \{1, \ldots, m\}$:
\begin{equation} \label{Psi_functions}
    \psi_i(\textbf{x}) := \dot{\psi}_{i-1}(\textbf{x}) + \alpha_i(\psi_{i-1}(\textbf{x})), \quad i \in \{1, \ldots, m\},
\end{equation}
where $\alpha_i(\cdot)$, $i \in \{1, \ldots, m\}$ denotes a $(m - i)^{\text{th}}$ order differentiable class $\mathcal{K}$ function. We further define a sequence of sets $C_i, i \in \{1, \ldots, m\}$ associated with \eqref{Psi_functions} in the form:
\begin{equation}\label{C_sets}
    C_i := \{\textbf{x} \in \mathbb{R}^n : \psi_{i-1}(\textbf{x}) \geq 0\}, \quad i \in \{1, \ldots, m\}.
\end{equation}

\begin{definition}[High Order Control Barrier Function (HOCBF)] \label{Higher_OrderCBF}
Let $C_1, \ldots, C_m$ be defined by \eqref{C_sets} and $\psi_1(\textbf{x}), \ldots, \psi_m(\textbf{x})$ be defined by \eqref{Psi_functions}. A function $b : \mathbb{R}^n \rightarrow \mathbb{R}$ is a HOCBF of relative degree $m$ if there exist $(m - i)^{\text{th}}$ order differentiable class $\mathcal{K}$ functions $\alpha_i, i \in \{1, \ldots, m-1\}$ and a class $\mathcal{K}$ function $\alpha_m$ such that
\begin{align}
    \sup_{\textbf{u} \in U} \bigg( \mathcal{L}_f^m b(\textbf{x}) + &\left[ \mathcal{L}_g L_f^{m-1} b(\textbf{x}) \right] u + \nonumber \\
    &O(b(\textbf{x})) + \alpha_m(\psi_{m-1}(\textbf{x})) \bigg) \geq 0, \label{Higher_Order_Certificate}
\end{align}
$\forall$ $\textbf{x} \in C_1 \cap \ldots \cap C_m$, and $O(\cdot) = \sum_{i=1}^{m-1} \mathcal{L}_f^i (\alpha_{m-i} \circ \psi_{m-i-1}(\textbf{x}))$.
\end{definition}
Following Definition 4 
 and \eqref{Higher_Order_Certificate}, and by taking as class $\mathcal{K}$ functions the identity function, we obtain the following form for our barrier condition as was defined in \cite{xiao2022control} (see (19) and (20) therein) :
\begin{align} \label{2orderCBF}
    \mathcal{L}^2_f b(\textbf{x}) + \mathcal{L}_g\mathcal{L}_f b(\textbf{x}) u_1 &+ \mathcal{L}_g\mathcal{L}_f b(\textbf{x}) u_2 + 2 \mathcal{L}_f b(\textbf{x}) + b(\textbf{x}) \geq 0 .
\end{align}
Hence, the final step is to calculate \eqref{2orderCBF} based on $b_5(\textbf{x}_b)$ that was defined in \eqref{CAV_Pedestrian_Constraint}. The corresponding values for this derivation are presented in the following proposition. For brevity, we substitute $q_1=\frac{\sigma}{2}$, $q_2=(x + q_1\text{cos}(\theta) - x_0)$ and $q_3=(y + q_1\text{sin}(\theta) - y_0)$. 
\begin{proposition}
For $b(\textbf{x})=b_5(\textbf{x}_b)$ in \eqref{2orderCBF}, the terms $b_5(\textbf{x})$, $\mathcal{L}_f b_5(\textbf{x})$, $\mathcal{L}^2_f b_5(\textbf{x})$, $\mathcal{L}_g\mathcal{L}_f b_5(\textbf{x}) u_1$,  $\mathcal{L}_g\mathcal{L}_f b_5(\textbf{x}) u_2$, are given by:
\vspace{-5pt}
\begingroup
\setlength{\belowdisplayskip}{10pt} 
\setlength{\abovedisplayskip}{10pt} 
\begin{align}
     \boldsymbol{b(\textbf{x})}&= \frac{\left(q_2\text{cos}(\xi)+q_3\text{sin}(\xi)\right)^2}{A^2} +\frac{\left(q_2\text{sin}(\xi)-q_3\text{cos}(\xi)\right)^2}{B^2}  \nonumber \\
     & -r_b -1 \nonumber \\
    \boldsymbol{\mathcal{L}_fb}& =v\text{cos}(\theta)\frac{2\text{cos}(\xi)\left(q_2\text{cos}(\xi)+q_3\text{sin}(\xi)\right)}{A^2}  \nonumber \\ & +v\text{cos}(\theta)\frac{2\text{sin}(\xi)\left(q_2\text{sin}(\xi)-q_3\text{cos}(\xi)\right)}{B^2} \nonumber \\
    & +v\text{sin}(\theta)\frac{2\text{sin}(\xi)\left(q_2\text{cos}(\xi)+q_3\text{sin}(\xi)\right)}{A^2}  \nonumber\\
    &-v\text{sin}(\theta)\frac{2\text{cos}(\xi)\left(q_2\text{sin}(\xi)-q_3\text{cos}(\xi)\right)}{B^2}\nonumber \\
    \boldsymbol{\mathcal{L}_f^2b} &= v\text{cos}(\theta)\frac{2\text{cos}^2(\xi)}{A^2}+ v\text{cos}(\theta) \frac{2\text{sin}^2(\xi)}{B^2}  \nonumber \\ 
    & + v\text{sin}(\theta)\frac{2\text{sin}(\xi)\text{cos}(\xi)}{A^2} - v\text{sin}(\theta)\frac{2\text{cos}(\xi)\text{sin}(\xi)}{B^2}    \nonumber \\
    & + v\text{cos}(\theta) \frac{2\text{cos}(\xi)\text{sin}(\xi)}{A^2} - v\text{cos}(\theta)\frac{2\text{sin}(\xi)\text{cos}(\xi)}{B^2}  \nonumber \\
    & + v\text{sin}(\theta)\frac{2\text{sin}^2(\xi)}{A^2} - v\text{sin}(\theta)\frac{-2\text{cos}^2(\xi)}{B^2} \nonumber \\ \boldsymbol{\mathcal{L}_{g_{u_2}}\mathcal{L}_fb} &= u_2 \text{cos}(\theta)\frac{2\text{cos}(\xi)\left(q_2\text{cos}(\xi)+q_3\text{sin}(\xi)\right)}{A^2}  \nonumber \\ &+u_2 \text{cos}(\theta)\frac{2\text{sin}(\xi)\left(q_2\text{sin}(\xi)-q_3\text{cos}(\xi)\right)}{B^2} \nonumber \\
    & + u_2\text{sin}(\theta)\frac{2\text{sin}(\xi)\left(q_2\text{cos}(\xi)+q_3\text{sin}(\xi)\right)}{A^2}  \nonumber\\
    &-u_2\text{sin}(\theta)\frac{-2\text{cos}(\xi)\left(q_2\text{sin}(\xi)-q_3\text{cos}(\xi)\right)}{B^2} \nonumber \\
\boldsymbol{\mathcal{L}_{g_{u_1}}\mathcal{L}_fb}&=u_2\frac{v}{\sigma} \cdot   \nonumber \\
 &-v\text{sin}\theta \frac{2\text{cos}(\xi)\left(q_2\text{cos}(\xi)+q_3\text{sin}(\xi)\right)}{A^2}  \nonumber \\
 & - v\text{sin}\theta\frac{2\text{sin}(\xi)\left(q_2\text{sin}(\xi)-q_3\text{cos}(\xi)\right)}{B^2}+ \nonumber \\
 &v\text{cos}\theta\frac{-2q_1\text{cos}^2(\xi)\text{sin}(\theta) + 2q_1\text{sin}(\xi)\text{cos}(\xi)\text{cos}(\theta)}{A^2}+  \nonumber \\
 &v\text{cos}\theta\frac{-2q_1\text{sin}^2(\xi)\text{sin}(\theta)-2q_1\text{sin}(\xi)\text{cos}(\xi)\text{cos}(\theta)}{B^2} +   \nonumber \\
 &  v\text{cos}\theta\frac{2\text{sin}(\xi)\left(q_2\text{cos}(\xi)+q_3\text{sin}(\xi)\right)}{A^2}-  \nonumber \\
 &v\text{cos}\theta\frac{2\text{cos}(\xi)\left(q_2\text{sin}(\xi)-q_3\text{cos}(\xi)\right)}{B^2}+\nonumber \\
 &v\text{sin}\theta \frac{-2q_1\text{sin}(\xi)\text{cos}(\xi)\text{sin}(\theta) + 2q_1\text{sin}^2(\xi)\text{cos}(\theta)}{A^2}-  \nonumber \\
 &v\text{sin}\theta\frac{-2q_1\text{cos}(\xi)\text{sin}(\xi)\text{sin}(\theta)-2q_1\text{cos}^2(\xi)\text{cos}(\theta)}{B^2}. \label{Pedestrian_Certificate}
 \end{align} 
 \endgroup
\end{proposition}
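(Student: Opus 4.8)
The plan is to prove the Proposition by a direct Lie-derivative computation of the scalar function $b_5$ from \eqref{CAV_Pedestrian_Constraint} along the control-affine bicycle dynamics \eqref{bicycle_model_2}, after which the five listed quantities are simply the ingredients that the high-order condition \eqref{2orderCBF} assembles. First I would split the control field into its two columns, writing $\dot{\textbf{x}}_b = f(\textbf{x}_b) + g_1(\textbf{x}_b)u_1 + g_2(\textbf{x}_b)u_2$ with $f = [v\cos\theta,\,v\sin\theta,\,0,\,0]^{T}$, $g_1 = [0,\,0,\,\tfrac{v}{\sigma},\,0]^{T}$ and $g_2 = [0,\,0,\,0,\,1]^{T}$. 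The key structural fact that dictates the second-order treatment is that $b_5$ depends on the state only through $x$, $y$ and $\theta$ (via $q_2$ and $q_3$) and is independent of $v$; consequently the acceleration $u_2$ cannot appear in $\dot b_5$ and surfaces only after a second differentiation, so \eqref{2orderCBF} is the appropriate certificate and the task reduces to evaluating $b_5$, $\mathcal{L}_f b_5$, $\mathcal{L}_f^2 b_5$, $\mathcal{L}_{g_1}\mathcal{L}_f b_5$ and $\mathcal{L}_{g_2}\mathcal{L}_f b_5$.

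To keep the algebra manageable I would introduce the shorthands $P := q_2\cos\xi + q_3\sin\xi$ and $Q := q_2\sin\xi - q_3\cos\xi$, so that $b_5 = P^2/A^2 + Q^2/B^2 - 1 - r_b$. Using $\partial q_2/\partial x = 1$, $\partial q_3/\partial y = 1$, $\partial q_2/\partial\theta = -q_1\sin\theta$ and $\partial q_3/\partial\theta = q_1\cos\theta$, the chain rule yields $\partial b_5/\partial x = 2P\cos\xi/A^2 + 2Q\sin\xi/B^2$ and $\partial b_5/\partial y = 2P\sin\xi/A^2 - 2Q\cos\xi/B^2$, while $\partial b_5/\partial v = 0$. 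Contracting $\nabla b_5$ with $f$ keeps only the $x$- and $y$-drift components and reproduces the stated $\mathcal{L}_f b_5$. The term $\mathcal{L}_{g_2}\mathcal{L}_f b_5$ is then immediate: since $\mathcal{L}_f b_5$ carries an overall factor $v$, its partial derivative with respect to $v$ is obtained by deleting that factor, and multiplication by $u_2$ gives the listed expression.

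The two remaining pieces come from differentiating $\mathcal{L}_f b_5$ once more. For $\mathcal{L}_f^2 b_5$ I would contract $\nabla(\mathcal{L}_f b_5)$ with $f$ again; because the only surviving second derivatives of $b_5$ are the constant combinations $\partial^2 b_5/\partial x^2 = 2\cos^2\xi/A^2 + 2\sin^2\xi/B^2$, the symmetric mixed partial, and $\partial^2 b_5/\partial y^2 = 2\sin^2\xi/A^2 + 2\cos^2\xi/B^2$, this collapses to the listed combination of trigonometric terms. I expect the genuine obstacle to be $\mathcal{L}_{g_1}\mathcal{L}_f b_5 = \tfrac{v}{\sigma}\,\partial(\mathcal{L}_f b_5)/\partial\theta$, because $\theta$ enters $\mathcal{L}_f b_5$ in two distinct ways: explicitly through the drift factors $\cos\theta$ and $\sin\theta$, and implicitly through $q_2$ and $q_3$ inside $P$ and $Q$. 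A disciplined application of the product rule to the explicit factors produces the summands of the form $\pm v\sin\theta(\cdots)$ and $\pm v\cos\theta(\cdots)$, whereas the chain rule through $\partial q_2/\partial\theta = -q_1\sin\theta$ and $\partial q_3/\partial\theta = q_1\cos\theta$ produces exactly the summands carrying the factor $q_1$; the bookkeeping hazard is conflating or omitting one of these two contributions, and this is where I would be most careful. Finally I would substitute $b_5$, $\mathcal{L}_f^2 b_5$, $\mathcal{L}_{g_1}\mathcal{L}_f b_5\,u_1$, $\mathcal{L}_{g_2}\mathcal{L}_f b_5\,u_2$ and $2\mathcal{L}_f b_5$ into \eqref{2orderCBF}, which yields the barrier condition and completes the proof.
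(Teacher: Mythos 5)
Your proposal is correct and is essentially the paper's own (omitted) argument: the paper's proof consists of a single remark that the result follows by direct computation under the input transformation of Remark~3, with all algebra excluded ``for brevity,'' and your plan spells out exactly that computation---splitting $g$ into its two columns $g_1=[0,0,\tfrac{v}{\sigma},0]^T$ and $g_2=[0,0,0,1]^T$, exploiting that $b_5$ is independent of $v$ (so $u_2$ can only appear at second order), and carefully separating the explicit dependence of $\mathcal{L}_f b_5$ on $\theta$ from the implicit dependence through $q_2,q_3$. One caveat worth noting: a faithful execution of your (correct) plan will not literally reproduce the printed formulas, which contain typographical slips---the displayed $\mathcal{L}_f^2 b$ omits the second contraction against the drift components $v\cos\theta$ and $v\sin\theta$ (each printed term carries only a single factor of $v$), the $\mathcal{L}_{g_{u_1}}\mathcal{L}_f b$ expression is prefixed by $u_2$ where $u_1$ is meant, and the last summand of $\mathcal{L}_{g_{u_2}}\mathcal{L}_f b$ carries a stray double negative---so any mismatch you encounter is a defect of the statement as typeset, not of your derivation.
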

\begingroup
\begin{proof}
The analytical derivation follows the process outlined in Remark 3. Due to the intricacy of the algebraic analysis involved, it is excluded for brevity.
\end{proof}
\endgroup
\begingroup

\begin{remark}
Because of space constraints we cannot list every term of \eqref{Pedestrian_Certificate} in a single barrier condition. Throughout the paper we slightly abuse notation by denoting the barrier condition associated with \eqref{CAV_Pedestrian_Constraint} simply as \eqref{Pedestrian_Certificate}.
\end{remark}

So far we have derived a barrier condition that couples the pedestrian’s unsafe set with the CAV’s control inputs. 
Because the steering angle is now an explicit state, its constraints—as well as those arising from the lane boundaries—must also be enforced. 
We therefore introduce additional CBFs to guarantee safety with respect to both the road edges and the steering‑angle limits.

The derivation of constraints associated with the lane markings is relatively straightforward.  Namely, when moving along the y-axis, we define the left boundary as \( x_{\text{left}} \) and the right boundary as \( x_{\text{right}} \). Hence we obtain the constraint \( x_{\text{left}} \leq x \leq x_{\text{right}} \), which can be expressed as \( b_6(\textbf{x}_b) = x - x_{\text{left}} \geq 0 \) and \( b_7(\textbf{x}_b) = x_{\text{right}} - x \geq 0 \). Similarly, for the y axis, we obtain $b_8(\textbf{x}_b)=y-y_{\text{right}}\geq0$ and $b_9(\textbf{x}_b)=y_{\text{left}}-y\geq0$. Yet, when a path is curved we use a combination of $b_6, b_7, b_8$ and $b_9$. By applying \eqref{2orderCBF}, results to the following barrier conditions:
\begin{align} \label{respecting_the_road}
    &0-v\text{sin}(\theta)u_1+\text{cos}(\theta)u_2+2v\text{cos}(\theta)+x-x_{\text{left}}\geq0, \nonumber \\
        &0+v\text{sin}(\theta)u_1-\text{cos}(\theta)u_2-2v\text{cos}(\theta)+x_{\text{right}}-x\geq0, \nonumber \\
        &0+v\text{cos}(\theta)u_1 + \text{sin}(\theta)u_2 + 2v\text{sin}(\theta) + y - y_{\text{right}} \geq0, \nonumber \\
                &0-v\text{cos}(\theta)u_1 - \text{sin}(\theta)u_2 - 2v\text{sin}(\theta) + y_{\text{left}} - y \geq0.
\end{align}
\endgroup
Note that the first term of each inequality is zero because the first term of \eqref{2orderCBF} is zero in each case; that is, $\mathcal{L}^2_f b_i(\textbf{x}_b) = 0$ for $i \in \{6,7,8,9\}$. Next, we want to also consider a physical constraint related to the steering angle $\delta$ which is generated by the centripetal force of each CAV. Given that the steering angle $\delta$ is already a control input, the associated constraint is already in linear form with respect to $\delta$. To define this constraint, we take motivation from an idea developed by \cite{hsu2009estimation} who estimates the steering limit based on the speed. Especially, the formula we use to get the maximum steering angle according to the speed of the vehicle is:
\begin{align} \label{steering_angle_constraint}
    &\delta_{\text{max}} = |\delta_{\text{max}}(0)(1-\frac{v}{v_{\text{max}}})|.
\end{align}
\noindent Here, $\delta_{\text{max}}(0)$ refers to the maximum possible angle of the steering wheel when the vehicle is idle. Note that in \eqref{bicycle_model_2} we have applied an input transformation, and consequently, we must follow this transformation for the steering angle constraint. In the following proposition we present the final constraint.
\begin{proposition}
Given the applied input transformation $u_1 = \tan(\delta)$, the constraint related to the steering angle is: 
\begin{align}
|u_{1}| \leq \text{tan}(|\delta_{\text{max}}(0)(1-\frac{v}{v_{max}})|). \label{Steering_Constraint}
\end{align}
\end{proposition}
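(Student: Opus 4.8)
The plan is to start from the speed-dependent physical bound $|\delta| \le \delta_{\max}$, with $\delta_{\max} = |\delta_{\text{max}}(0)(1 - v/v_{\max})|$ as in \eqref{steering_angle_constraint}, and simply propagate this bound through the input transformation $u_1 = \tan(\delta)$. Since $\delta = \tan^{-1}(u_1)$, the result should follow by applying the tangent function to both sides of the bound, provided $\tan$ acts monotonically on the relevant interval.

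First I would record the admissible range of the steering angle, namely $-\delta_{\max} \le \delta \le \delta_{\max}$, equivalently $|\delta| \le \delta_{\max}$. Next I would verify that $\delta_{\max}$ lies in $[0, \pi/2)$. This holds because $v \in [v_{\min}, v_{\max}]$ forces $1 - v/v_{\max} \in [0,1]$, so $\delta_{\max} \le |\delta_{\text{max}}(0)| < \pi/2$ by the physical fact that the idle steering angle never reaches a right angle. This confirms that $\tan$ is well-defined, continuous, and strictly increasing on $[-\delta_{\max}, \delta_{\max}]$.

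Then I would apply $\tan$ to the inequality $|\delta| \le \delta_{\max}$. Because $\tan$ is strictly increasing on $(-\pi/2, \pi/2)$, it preserves the inequality, giving $\tan(|\delta|) \le \tan(\delta_{\max})$. Using that $\tan$ is odd, I would rewrite $\tan(|\delta|) = |\tan(\delta)| = |u_1|$, which yields $|u_1| \le \tan(\delta_{\max}) = \tan(|\delta_{\text{max}}(0)(1 - v/v_{\max})|)$, exactly \eqref{Steering_Constraint}.

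The only delicate point is the monotonicity and domain argument: I must ensure that the entire interval $[-\delta_{\max}, \delta_{\max}]$ stays within the principal branch $(-\pi/2, \pi/2)$ so that $\tan$ is monotone and the identity $\tan(|\delta|) = |\tan(\delta)|$ is valid. This is precisely where the physical assumption on $\delta_{\text{max}}(0)$ enters, and it is the main (and essentially the only) obstacle; the remaining manipulation is immediate.
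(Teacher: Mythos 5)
Your proof is correct and takes essentially the same route as the paper's: substitute the input transformation and propagate the speed-dependent bound through the strictly increasing $\tan$ function on the principal branch. You are in fact slightly more careful than the paper, which writes the intermediate step as $\tan^{-1}(u_1) \leq |\delta_{\text{max}}(0)(1-\frac{v}{v_{\max}})|$ without the absolute value on the left; your explicit appeal to the oddness of $\tan$ and the check that $|\delta_{\text{max}}(0)(1-\frac{v}{v_{\max}})| < \pi/2$ closes that small gap.
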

\begin{proof}
We know from the applied input transformation that $\delta = \tan^{-1}(u_1)$. Hence, the constraint in \eqref{steering_angle_constraint} becomes:
$\tan^{-1}(u_1) \leq |\delta_{\text{max}}(0)(1 - \frac{v}{v_{\text{max}}})|.$
By applying the $\tan$ function (that is increasing) to both sides, we obtain
$|u_1| \leq \tan\left(|\delta_{\text{max}}(0)(1 - \frac{v}{v_{\text{max}}})|\right).$
\end{proof}

Finally, to consider abrupt changes in acceleration when a CAV enters an emergency mode, we add a numerical constraint on the jerk to be within a minimum $j_{\text{min}}$ and maximum $j_{\text{max}}$ value, i.e.,
\begin{equation} \label{Jerk_Constraint}
    j_{\text{min}} \leq \frac{u_2(t + \Delta t) - u_2(t)}{\Delta t} \leq j_{\text{max}}.
\end{equation}
Until this point, we derived sufficient CBFs and constraints that each CAV must obey when the emergency mode is activated. However, we have not yet addressed how a CAV smoothly returns to the center of its lane (nominal path) after deviating from it during an emergency maneuver. Additionally, while the certificates account for braking based on proximity to the pedestrian, we want to impose a lower desired speed during emergency mode to ensure uniform speed reduction, even when CAVs are far from the pedestrian and braking is not immediately triggered from the barrier conditions. In the next subsection, we address these cases and incentivize CAVs to: 1) re-align with the center of their lane after safely passing a pedestrian or after the pedestrian clears the roadway, and 2) adopt a lower desired speed during emergency mode.

\subsection{Introducing desired states as soft constraints}
\vspace{-5pt}
\noindent 
After a CAV has left the lane center to avoid a pedestrian and the avoidance maneuver is complete, we define a \textit{recovery period} during which the vehicle realigns itself with the center of the road. Thus, we aim the control inputs to minimize a penalty function that captures the distance of the CAV from the center of the lane. This can be done by using a soft constraint associated with state variables in the desired reference trajectory. Namely, we define the function $E(\textbf{x}_b(t))=(x-x^{ref})^2+(y-y^{ref})^2$ that captures the deviation of the nominal path from the current position. Then, the soft constraint is defined as:
\begin{align} \label{softcos}
    \mathcal{L}^2_f E(\textbf{x}_b) + \mathcal{L}_g\mathcal{L}_f E(\textbf{x}_b) u_1 &+ \mathcal{L}_g\mathcal{L}_f E(\textbf{x}_b) u_2 + \nonumber \\ & 2 \mathcal{L}_f E(\textbf{x}_b) +  E(\textbf{x}_b) \leq e(t),
\end{align}
where $e(t)$ represents a slack variable. To provide an insight into how this works, we will later see that once a CAV exits the emergency mode, it tries to minimize the objective $e^2(t)$ subject to \eqref{softcos}. Hence, the system indirectly drives the left-hand side of \eqref{softcos} towards zero. This incentivizes the vehicles to align with the center of their lane, as deviations increase the magnitude of $e(t)$, which the optimization penalizes.
By expanding \eqref{softcos}, we obtain:
\begin{align} \label{refpath}
    &\underbrace{2v^2}_{\mathcal{L}^2_f E} + \underbrace{u_1 \frac{v2((x^{ref}-x)v\text{sin}(\theta)+(y-y^{ref})v\text{cos}(\theta))}{\sigma}}_{\mathcal{L}_g\mathcal{L}_f E u_1} + \nonumber \\ 
    & \underbrace{u_22((x-x^{ref})\text{cos}(\theta) + (y-y^{ref})\text{sin}(\theta))}_{\mathcal{L}_g\mathcal{L}_f E u_2 } + \nonumber \\
    & \underbrace{4((x-x^{ref})v\text{cos}(\theta) + (y-y^{ref})v\text{sin}(\theta))}_{2\mathcal{L}_f E} + \nonumber \\
    & \underbrace{(x-x^{ref})^2+(y-y^{ref})^2}_{ E} \leq e(t).
\end{align}

\begin{remark}
 When a CAV attempts to return to the center of the lane by using the soft constraint in \eqref{refpath}, it may over-correct when reaching the center of the lane. This can lead to oscillations around the lane center, a common issue in control applications when aiming to reach a desired state. To address this, we apply an additional barrier condition in the case when a CAV has deviated from its initial path. For example, this extra barrier condition ensures that if the CAV has steered to the right, it will not surpass the lane center when returning, thus avoiding overcorrection. This additional barrier condition can be easily computed, similarly to those in \eqref{respecting_the_road}, by replacing the boundary values with the lane center shifted by a tiny value. This addition promotes a smooth transition back to the initial path, as we will thoroughly discuss in the simulation results. 
\end{remark}
By following the same idea we can also add a function $S(\textbf{x}_b)$ that captures deviations from a desired speed. This function is $S(\textbf{x}_b)=(v(t)-v_{\text{ref}})^2$. Accordingly, we obtain: 
\begin{equation} \label{softcos2}
\mathcal{L}_f S(\textbf{x}_b) + \mathcal{L}_g\mathcal{L}_f S(\textbf{x}_b) u_1 + \mathcal{L}_g\mathcal{L}_f S(\textbf{x}_b) u_2 +  S(\textbf{x}_b) \leq s(t) 
\end{equation} 
and by expanding \eqref{softcos2} yields:
\begin{align}
 \underbrace{0}_{\mathcal{L}_g\mathcal{L}_f S u_1}+\underbrace{0}_{\mathcal{L}_g\mathcal{L}_f S u_2} + \underbrace{2(v(t)-v_{\text{ref}})u_2}_{\mathcal{L}_f S} + \underbrace{(v(t)-v_{\text{ref}})^2}_{S}\leq s(t).\label{soft_speed}
\end{align}
\begin{remark}
We observe that \eqref{softcos} and \eqref{softcos2} share a similar structure with the inequalities \eqref{CBF constraint} and \eqref{2orderCBF}, respectively. First, note that the direction of the inequalities in \eqref{softcos} and \eqref{softcos2} differs from that in \eqref{CBF constraint} and \eqref{2orderCBF}. This difference arises because the inequalities in \eqref{softcos} and \eqref{softcos2} are treated as soft constraints, aiming to drive the left-hand side towards zero. Secondly, note that for \eqref{softcos2}, we built upon \eqref{CBF constraint}, as it was unnecessary to compute higher-order derivatives. That is, the first derivative of $v_i(t)$ already involves the control input \( u_2 \), which directly governs the speed. In contrast, for \eqref{softcos}, higher-order derivatives were required to ensure that both \( u_1 \) and \( u_2 \) appear explicitly. Thus, we designed the soft constraint using the concept of higher order relative degree; see \eqref{2orderCBF}.
\end{remark}

Having established the system dynamics and the relevant hard and soft constraints, we now formalize the optimization problem followed by the CAVs in emergency mode.
\vspace{-5pt}
\subsection{Optimization problem for pedestrian avoidance}
\vspace{-6pt}
Once a CAV enters an emergency mode, its control inputs are defined through the following QP optimization problem:
\begin{align} 
\underset{u_1,u_2}{\min} \int_{t_{ic}}^{t_{fc}}  &w_1 (u_1-u_{\text{ref}})^2 + w_2 u_2^2 + w_3e(t)^2 + w_4 s(t)^2 dt \nonumber \\
     \label{emergency_mode} \\
    \text{subject to} \quad & \eqref{CBFrearend1}, \eqref{Cert_acc_1},\eqref{Cert_acc_2},\eqref{Cert_lateral}, \eqref{Pedestrian_Certificate} \nonumber \\ &\eqref{respecting_the_road},\eqref{Steering_Constraint},
\eqref{Jerk_Constraint},\eqref{refpath},\eqref{soft_speed}. \nonumber 
\end{align} 
Here, $t_{ic}$ denotes the time when the CAV enters emergency mode, and $t_{fc}$ is the time it fully realigns with the center of its lane (i.e., the nominal path). As already discussed, during the interval $[t_{ic}, t_{fc}]$, the CAV may transition to a recovery period after avoiding the pedestrian. This phase, referred to as recovery mode, involves realigning the vehicle with the center of the road, right after the emergency mode ends. Both emergency and recovery modes are governed by \eqref{emergency_mode}, while the weights $w_i, i\in\{1,2,3,4\}$ are adjusted according to the specific requirements of each mode, as discussed below; see the flow chart in Fig. \ref{fig:flow_chart}.

Next, we discuss in detail how the emergency and recovery modes are governed by \eqref{emergency_mode}. The first two terms in \eqref{emergency_mode} aim to minimize deviations in acceleration from a reference value and steering effort, respectively. The remaining two terms prioritize tracking the nominal path and maintaining the desired speed. During emergency mode, safety is paramount so we set minimal values for \( w_1, w_2 \), and \( w_3 \). This ensures the optimization focuses primarily on satisfying safety constraints and achieving a significantly reduced desired speed. After emergency mode ends (the pedestrian was avoided, or the pedestrian left the road), the CAV may require time to realign during recovery mode. At this stage, increasing \( w_1, w_2 \), and particularly \( w_3 \) is crucial, as this emphasizes control inputs that guide the vehicle back to the lane center. Note that the second term in \eqref{emergency_mode} is included to penalize high steering inputs, which could result in an uncomfortable driving experience.

It is also worth-noting that during emergency mode, the selection of the reference control input \( u_{\text{ref}} \) is not critical, as its weight $w_1$ is assigned a minimal value. This is reasonable given that the vehicle prioritizes avoiding the pedestrian, rather than following an optimal trajectory. On the other hand, during recovery mode, the value of \( u_{\text{ref}} \) is determined based on the analysis presented in Section 3 and the \textit{Potryagin's Minimum Principle}. Note that during the time that a CAV re-aligns with the center of road, constant replanning is applied. Replanning involves that a CAV updates its reference control input \( u_{\text{ref}} \)  considering the states of the other CAV. Also, when defining the trajectories of the other vehicles, the lateral movement of the CAV in recovery mode must be accounted for. Specifically, its trajectory is accounted as moving from its current position to the center of the road, following a path analogous to the hypotenuse of a triangle, as discussed in \cite{Malikopoulos2020}; see Remark 3. Next, we proceed to discuss the functionality of our re-planning mechanism, when a pedestrian is detected.


\begin{figure}
        \centering
        \includegraphics[width=0.45\textwidth]{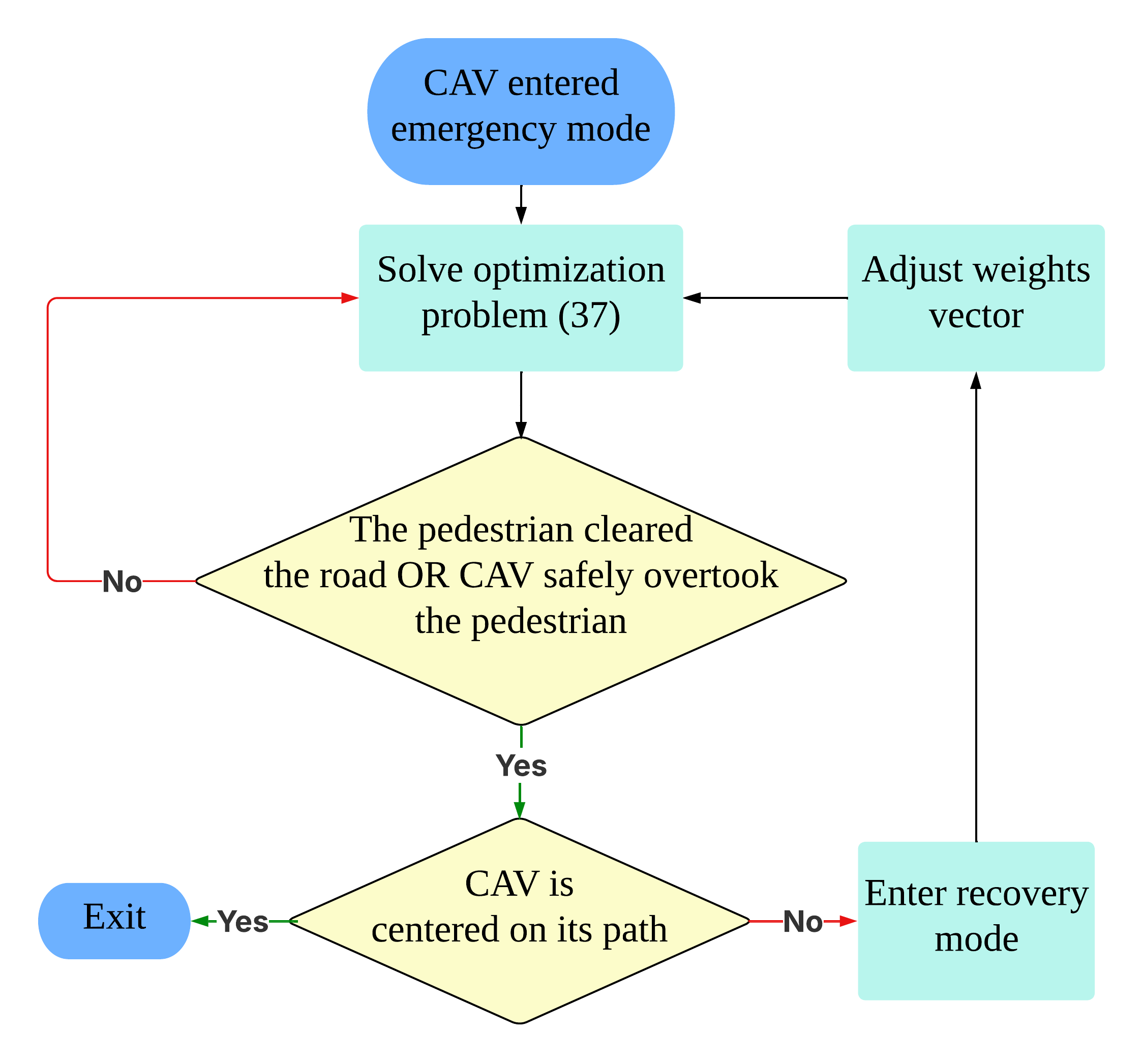}
        \vspace{-5pt}
        \caption{Flow chart when a CAV enters the emergency mode.}
    \label{fig:flow_chart}
\end{figure}
  
\subsection{Replanning mechanism}
\vspace{-6pt}
When a pedestrian is detected, all CAVs in the intersection constantly replan their reference trajectories, so they can account for the states of the CAVs in emergency mode. When replanning occurs, all vehicles must redefine their reference trajectories simultaneously. That requires careful management of the sequence in which they replan. Re-sequencing algorithms, such as the one proposed in \cite{chalaki2021Reseq}, address this issue. In this paper, we utilize the re-sequencing and replanning algorithm introduced in \cite{chalaki2021Reseq}. In this approach, each vehicle is assigned a weight factor $\omega_i$, which is inversely proportional to the length of its interval \( \mathcal{F}_i(t_i^0) = [\underline{t}_i^f, \overline{t}_i^f] \), as defined in the \textit{Time-optimal control problem} in Section 2. This means that vehicles with larger intervals are assigned lower priority, as they have a longer time horizon available for planning. Also for each CAV \( i \in \mathcal{N} \), we define a processing time as \( P_i = \underline{t}_i^f \), representing the minimum time required to exit the control zone. We then categorize the vehicles operating on each path as distinct chains. For each chain, a constraint is imposed such that a vehicle cannot replan before the vehicle preceding it in the chain. Each chain is then associated with a \( \rho \)-factor, calculated as $\rho(z) = \max_{\alpha \in \{1, \dots, k\}} \left( \frac{\sum_{j=1}^{\alpha} \omega_i}{\sum_{j=1}^{\alpha} P_j} \right) = \frac{\sum_{j=1}^{\alpha^*} \omega_i}{\sum_{j=1}^{\alpha^*} P_j}$ where $k$ is the last vehicle on the chain (or path) $z$. The chain with the highest \( \rho \)-factor is given priority. Vehicles in this chain are sequenced first up to the vehicle \( \alpha^* \), which is the CAV that maximizes the \( \rho \)-factor of the chain. These vehicles are then removed from the list and the process is repeated for the remaining CAVs. For a detailed analysis of this algorithm, see \textit{Algorithm 1} in \cite{chalaki2021Reseq}.

\section{Simulations}
\vspace{-10pt}
To demonstrate our framework, we conducted simulations in Matlab. We consider an intersection as shown in Fig. \ref{fig:Intersection}, with a control zone range of 100 meters. Additionally, we consider an object shown in Fig. \ref{fig:pedestrian_detection} that obstructs pedestrian's view of the oncoming traffic, as well as a pedestrian walking behind the object towards the road without noticing approaching vehicles. We assume the following parameter values: $\Delta t= 0.025$s, $\sigma = 2$ m, $u_{\text{max}}=5\ \text{m/s}^2$, $u_{\text{min}} = -5\ \text{m/s}^2$
, $v_\text{min}=0.1\text{m/s}$, $v_\text{max}=25\text{m/s}$, $\phi=1.8$, $\gamma=1.5\text{m}$, $j_{\text{min}}=-7$m/s$^3$, $j_{\text{max}}=5$m/s$^3$, $w=[0,0,0,1]$ in emergency mode and $w=[1.5,1,2,0]$ in recovery mode. We conducted simulations considering two different scenarios with varying initial conditions and parameter settings.

\textit{Scenario 1:} Here, in Fig. \ref{fig:pedestrian_detection} we see the exact moment when a pedestrian enters the right lane of the upward road. Instantaneously, at $t = 4.7$ seconds, the pedestrian is detected by a CAV driving on the same lane with a distance of $18$ meters. In addition, two other CAVs are approaching from behind in the middle and left lanes. For clarity, we label the CAV on the right lane as CAV 1, the one in the middle lane as CAV 2, and the one in the left lane as CAV 3 (see the red vehicles in Fig. \ref{fig:pedestrian_detection}). Immediately after detection, CAV 1 enters an emergency mode and broadcasts the information to the other CAVs, causing CAVs 2, and 3 to enter emergency mode, as well. We validate that by observing the speed profiles in Fig. \ref{fig:speeds}, where CAVs 2 and 3 also reduce their speeds to approach the emergency speed of \( 6 \, \text{m/s} \). In contrast, CAV 1 reduces its speed to \( 2.8 \, \text{m/s} \) due to the pedestrian's proximity. When the pedestrian leaves the road at \( t = 7.1 \) sec, as shown in Fig. \ref{fig: snapshot 3}, CAVs 1, 2, and 3, exit the emergency mode and replan their trajectories using the framework outlined in Section 4.4. In addition, CAV 1 enters a recovery mode due to the heading angle it obtained to avoid the pedestrian. The exit from emergency mode is evident in Fig. \ref{fig:speeds}, where CAVs 1, 2, and 3 adjust their speed profiles, increasing their speeds based on the methods presented in Section 3. Note that the new speed profiles do not attempt to replicate those prior to the critical event, as the vehicles’ states differ during replanning, resulting in unique sequences and trajectories. In addition, it is worth noting that for this scenario, we intentionally selected relatively low initial vehicle speeds to clearly observe changes in the speed profiles before, during, and after the critical event.
Moreover,  in Fig. \ref{fig:minimum_distances}, we observe the minimum rear-end distance among all vehicles and the minimum relative distance from shared conflict points between CAVs during and following the critical event. That confirms that the replanning mechanism effectively provides trajectories that allow all CAVs to safely navigate the intersection.

\begin{figure*} 
        \centering
    \begin{subfigure}[b]{0.24\textwidth}
        \includegraphics[width=\textwidth]{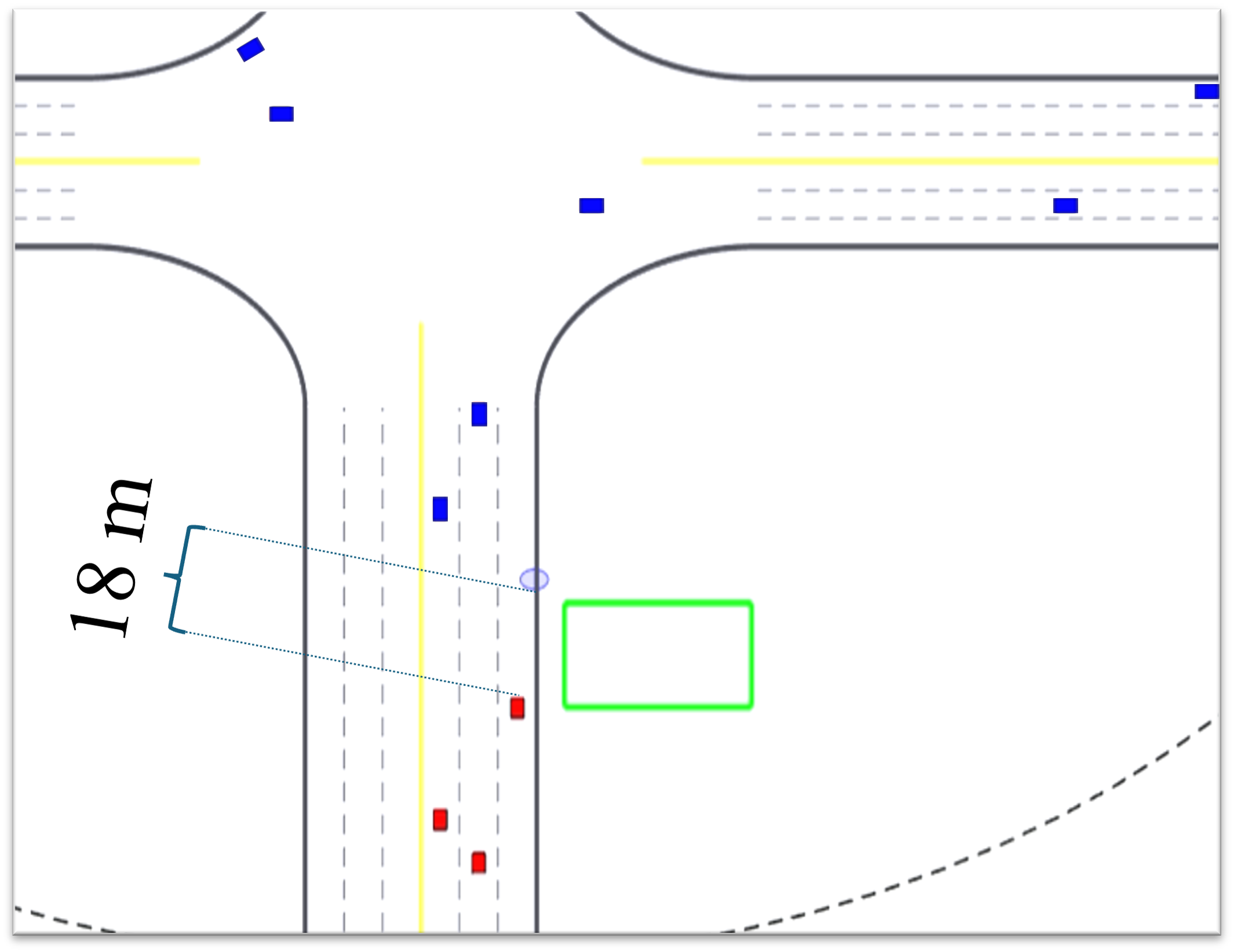}
        \caption{Pedestrian detection}
        \label{fig:pedestrian_detection}
    \end{subfigure}
    \hfill
    \begin{subfigure}[b]{0.24\textwidth}
        \includegraphics[width=\textwidth]{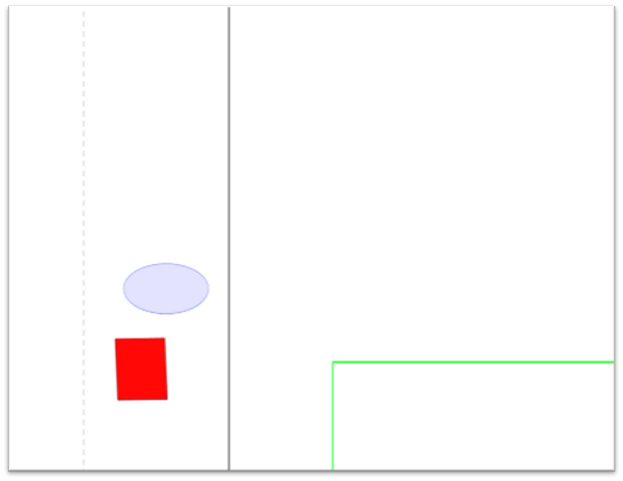}
        \caption{Snapshot 1}
        \label{fig: snapshot 1}
    \end{subfigure}
    \hfill    
    \begin{subfigure}[b]{0.24\textwidth}
        \includegraphics[width=\textwidth]{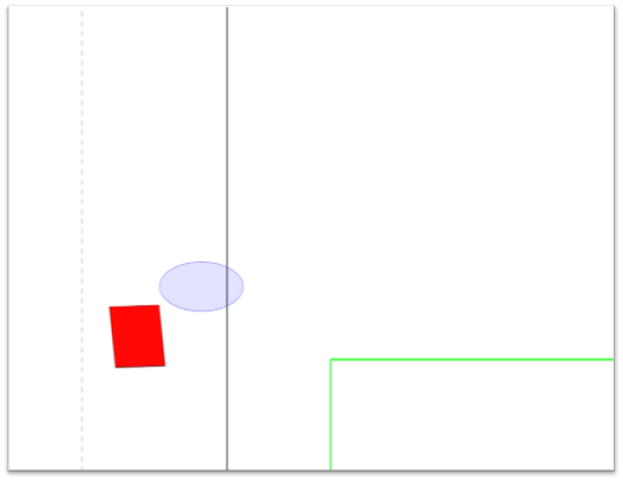}
        \caption{Snapshot 2}
        \label{fig: snapshot 2}
    \end{subfigure}
        \hfill    
    \begin{subfigure}[b]{0.24\textwidth}
        \includegraphics[width=\textwidth]{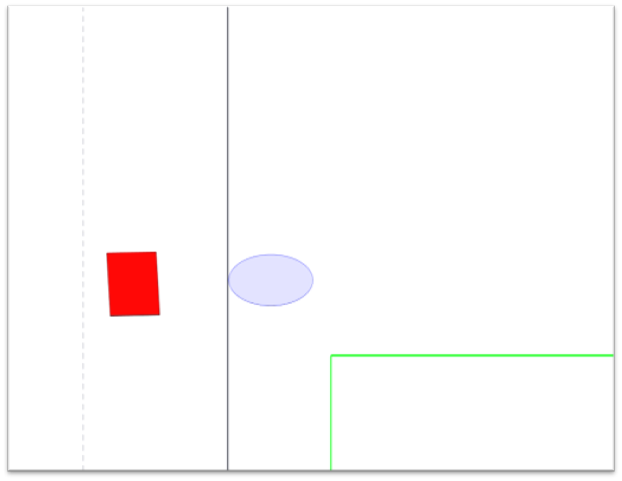}
        \caption{Snapshot 3}
        \label{fig: snapshot 3}
    \end{subfigure}
    \label{fig:Snapshots}
    \caption{Maneuver of CAV for pedestrian avoidance in Scenario 1.}
\end{figure*}

\begin{figure*}

        \centering
    \begin{subfigure}[b]{0.32\textwidth}
        \includegraphics[width=\textwidth]{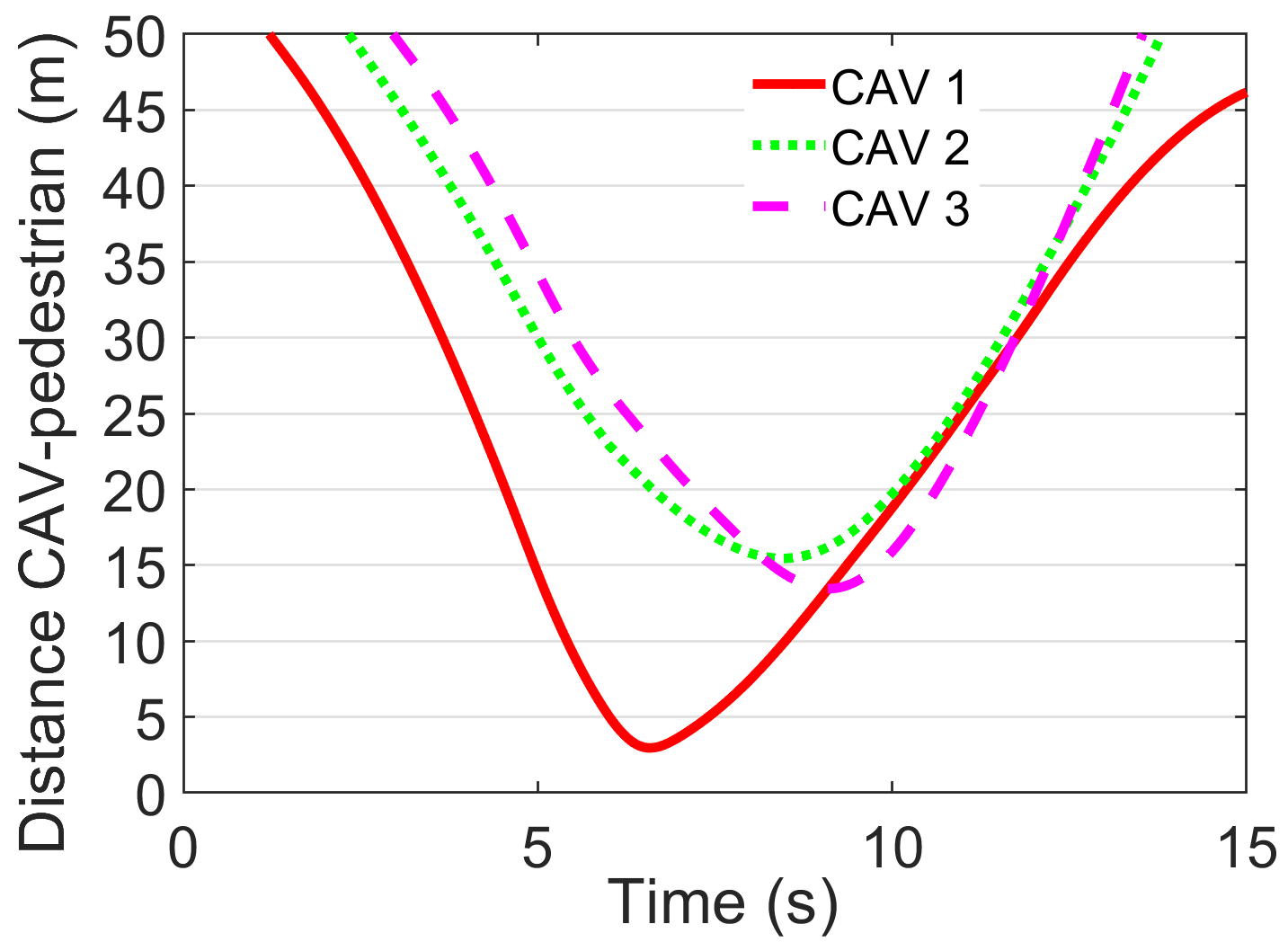}
                \vspace{-15pt}
        \caption{Distance between CAVs - pedestrian}
        \label{fig:distances}
    \end{subfigure}
    \hfill
    \begin{subfigure}[b]{0.32\textwidth}
        \includegraphics[width=\textwidth]{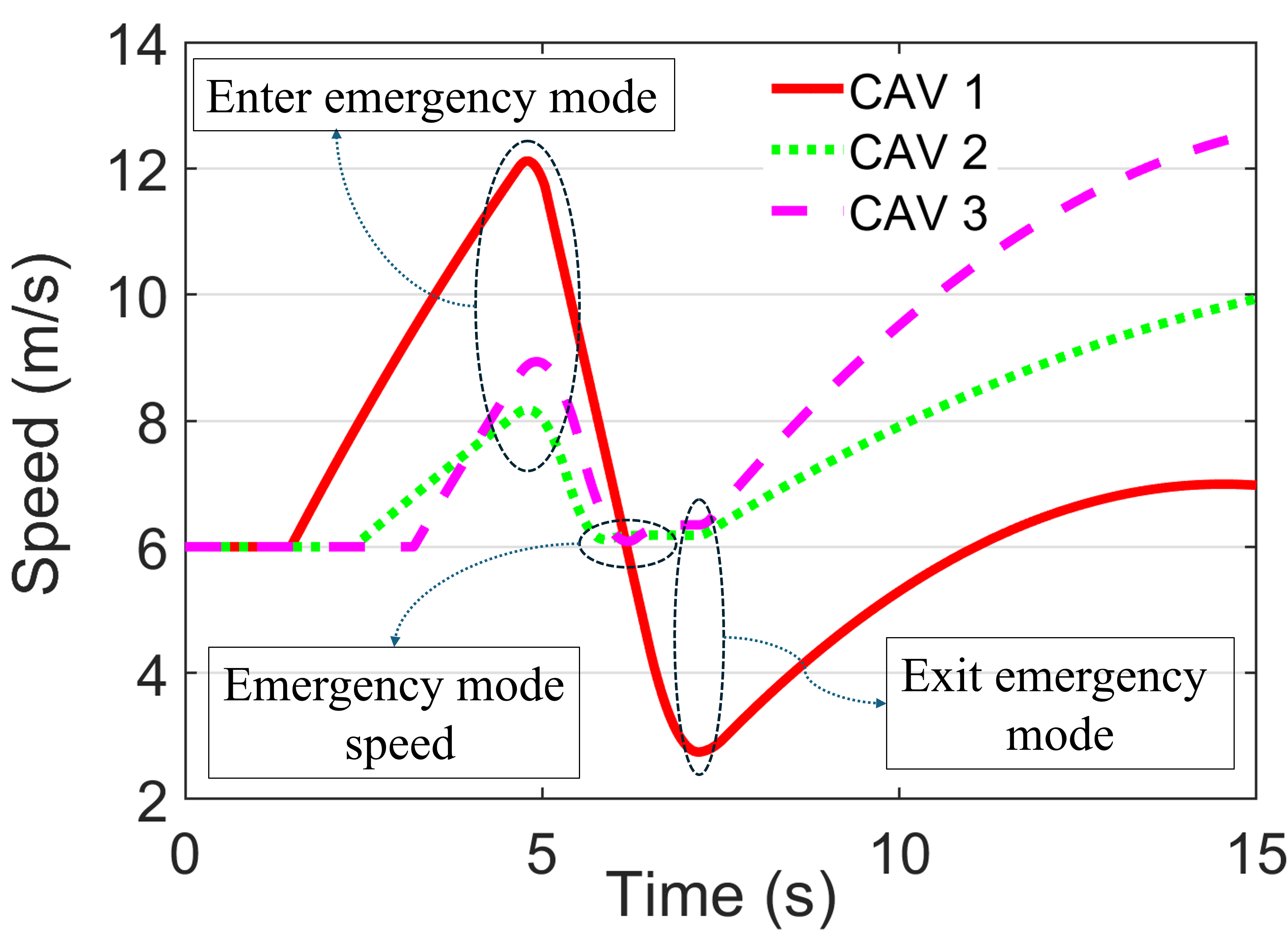}
                \vspace{-15pt}
        \caption{Speeds of CAVs}
        \label{fig:speeds}
    \end{subfigure}
    \hfill    
    \begin{subfigure}[b]{0.32\textwidth}
        \includegraphics[width=\textwidth]{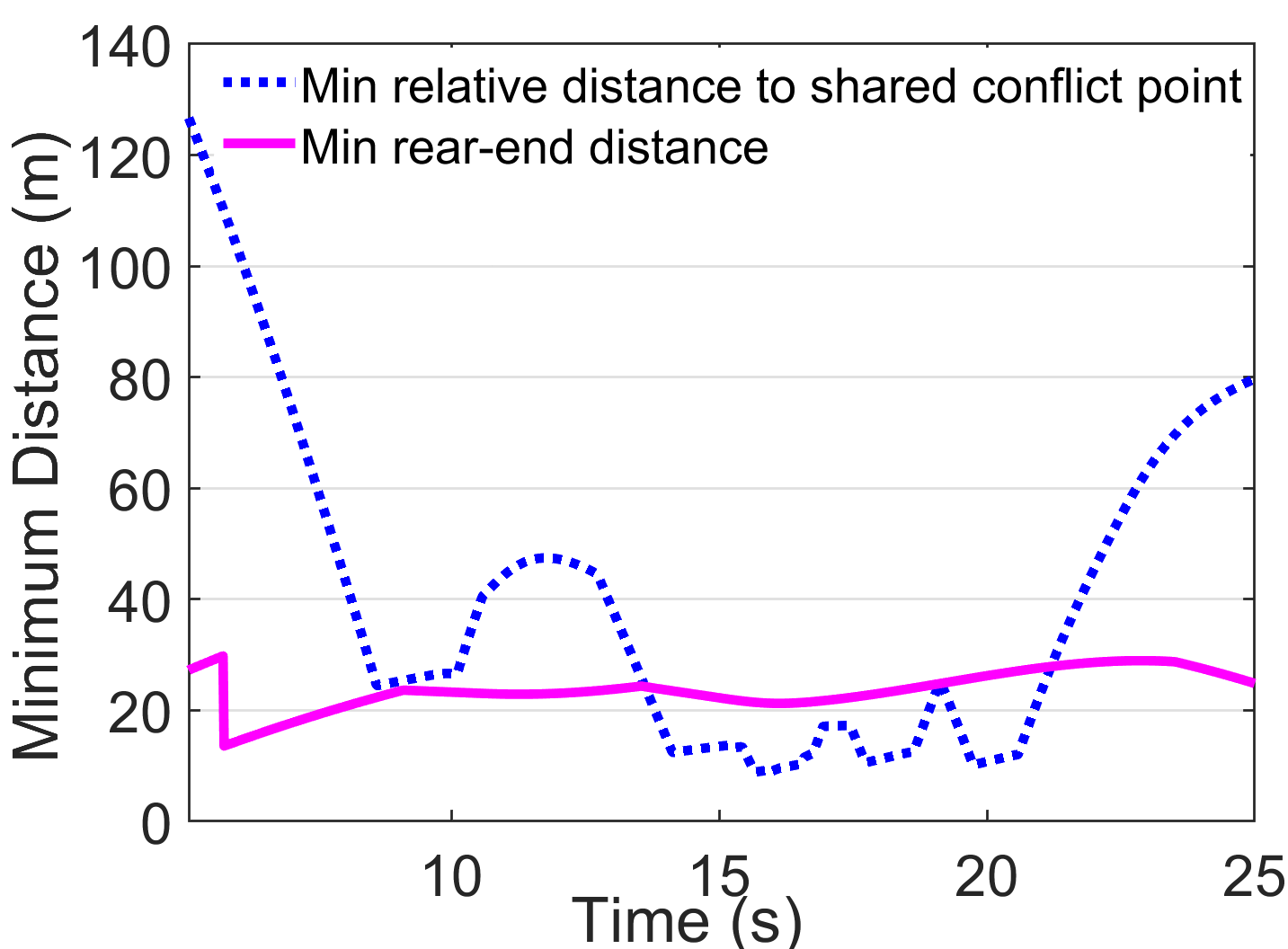}
        \vspace{-15pt}
        \caption{Minimum safety distances}
        \label{fig:minimum_distances}
    \end{subfigure}
    \label{fig:VehicleDynamicsPath5}
    \vspace{-5pt}
    \caption{Verification of safety between CAVs and Pedestrian in Scenario 1}
\end{figure*}

\begin{figure*}
        \centering
    \begin{subfigure}[b]{0.33\textwidth}
        \includegraphics[width=\textwidth]{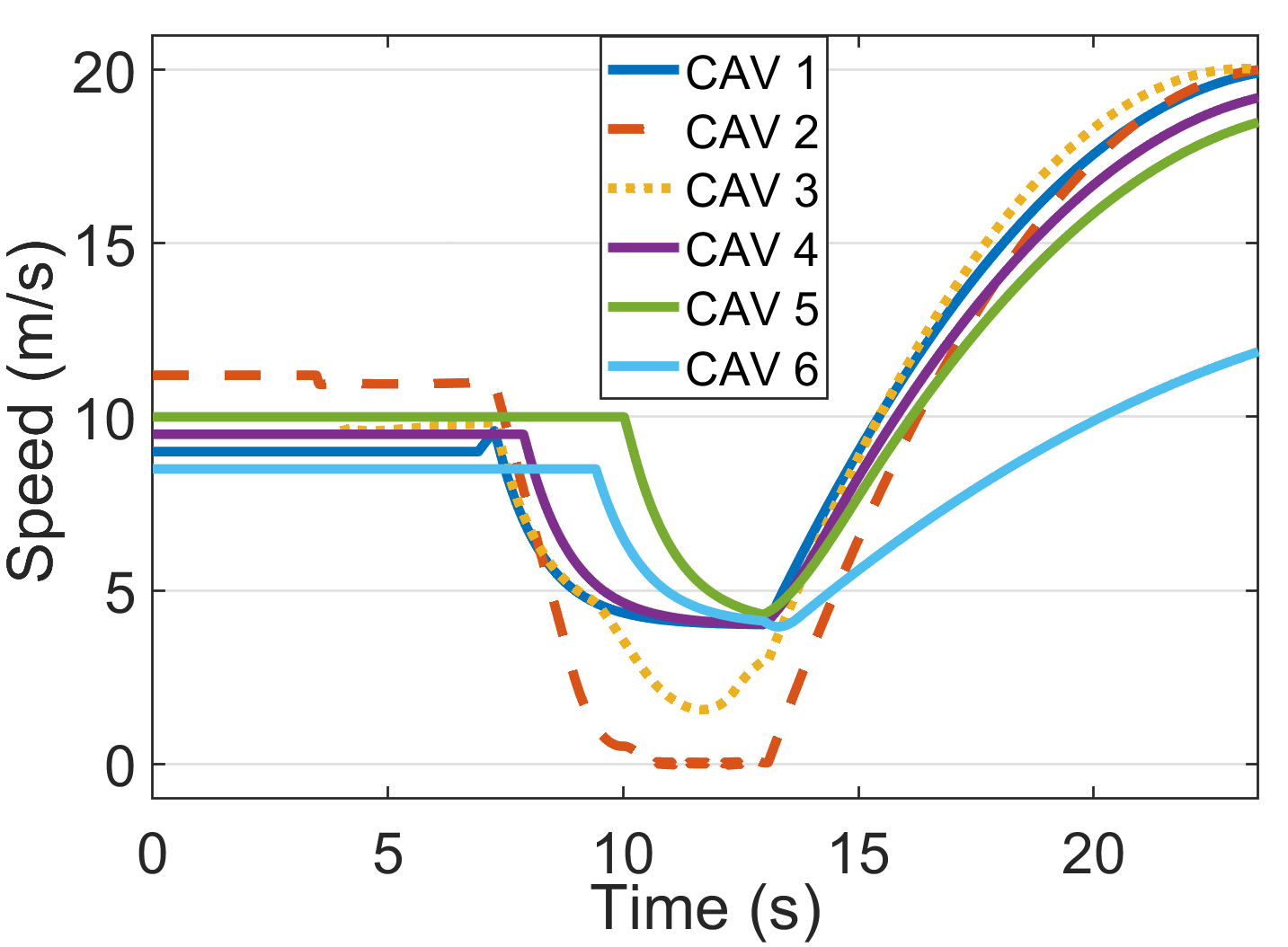}
        \caption{Speeds of CAVs}
        \label{fig:Speeds_2}
    \end{subfigure}
    \hfill
    \begin{subfigure}[b]{0.33\textwidth}
        \includegraphics[width=\textwidth]{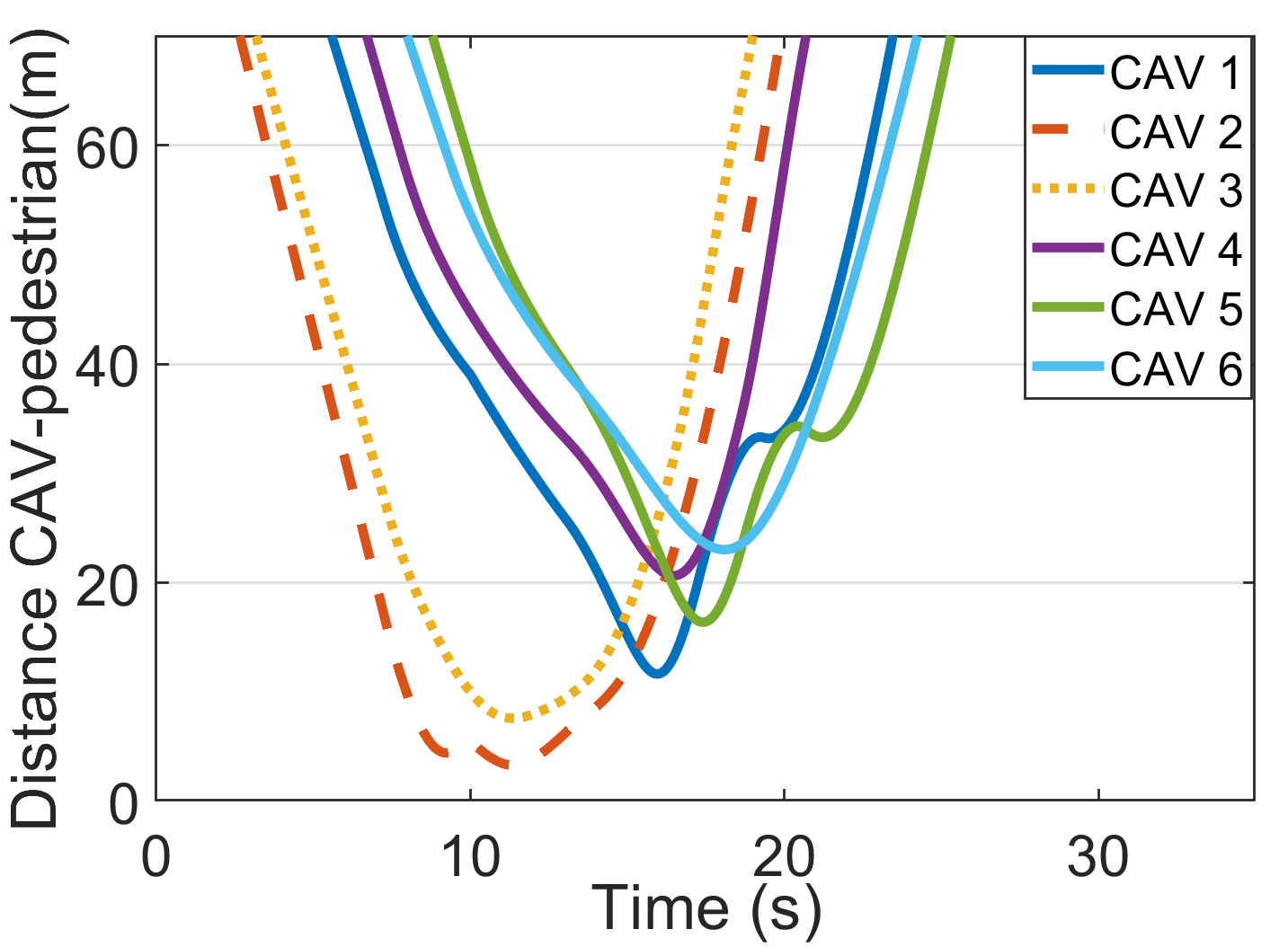}
        \caption{Distance from CAVs}
        \label{fig:Distances_2}
    \end{subfigure}
        \hfill
    \begin{subfigure}[b]{0.33\textwidth}
        \includegraphics[width=\textwidth]{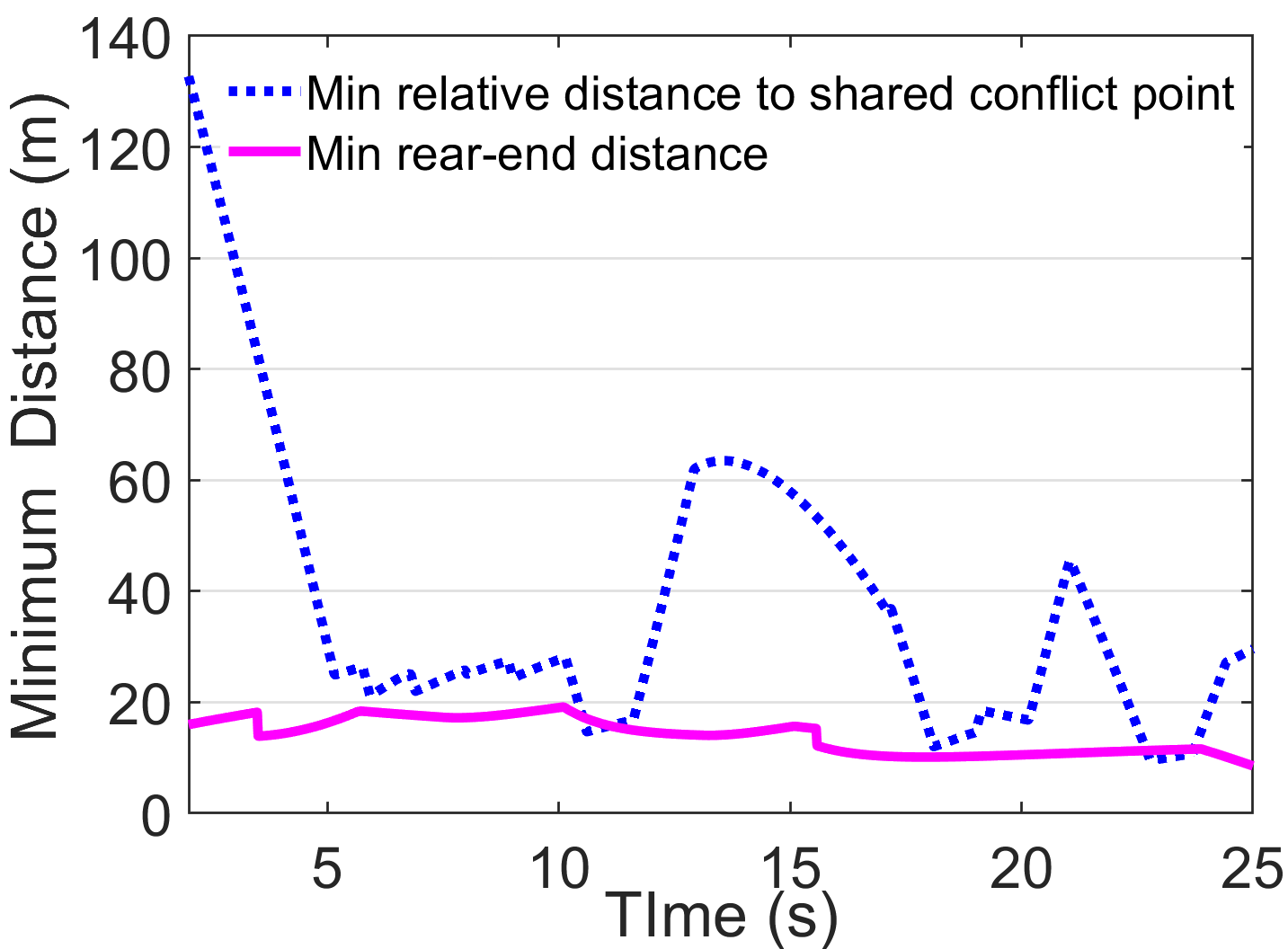}
        \caption{Minimum safety distances}
        \label{fig:Distances_scenario_2}
    \end{subfigure}
    \caption{Verification of safety between CAVs and Pedestrian in Scenario 2.}
\end{figure*}

\begin{figure}
        \centering
    \begin{subfigure}[b]{0.235\textwidth}
        \includegraphics[width=\textwidth]{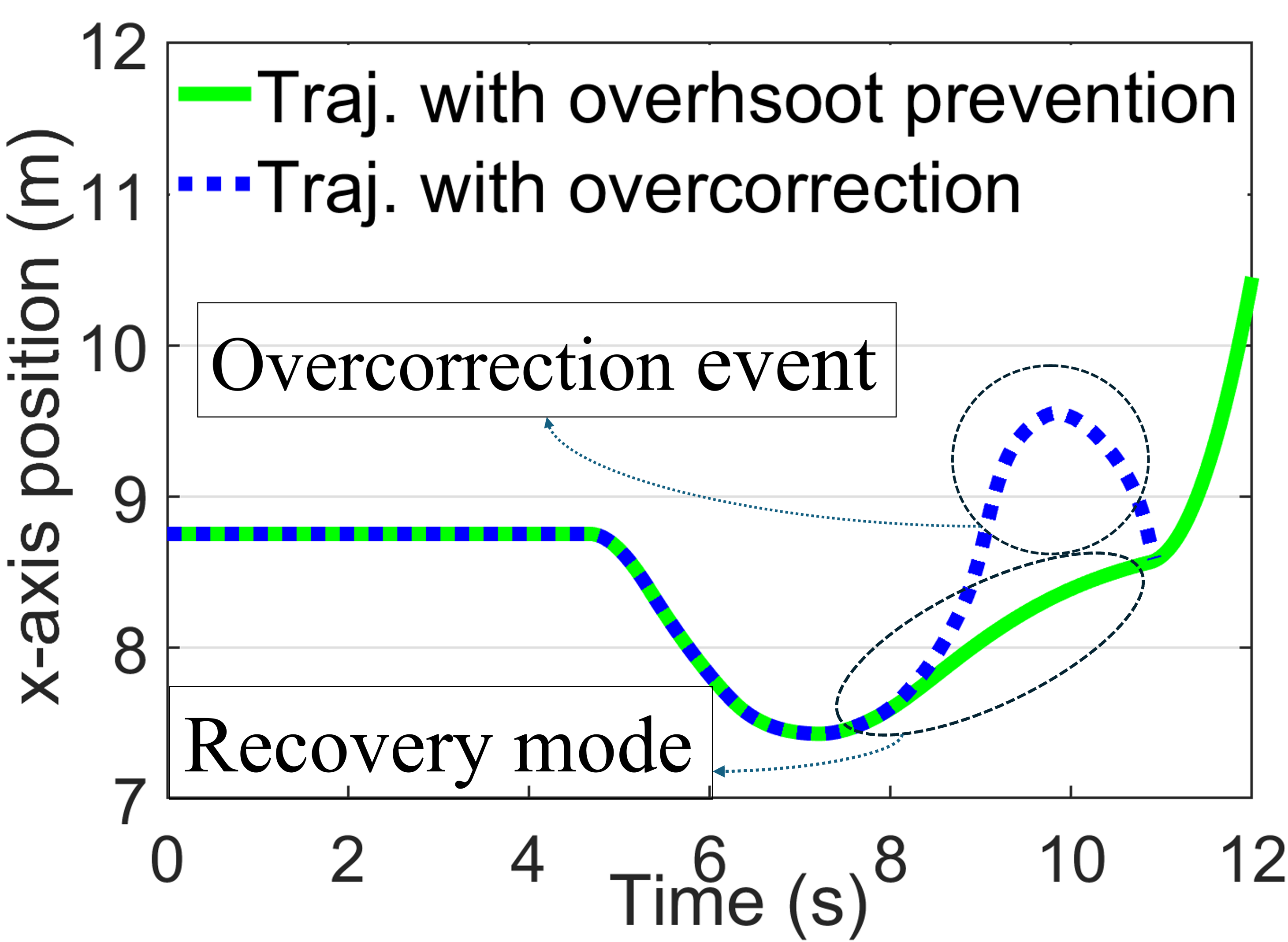}
        \caption{Trajectory of CAV 1}
        \label{fig:Trajectories}
    \end{subfigure}
    \hfill
    \begin{subfigure}[b]{0.235\textwidth}
        \includegraphics[width=\textwidth]{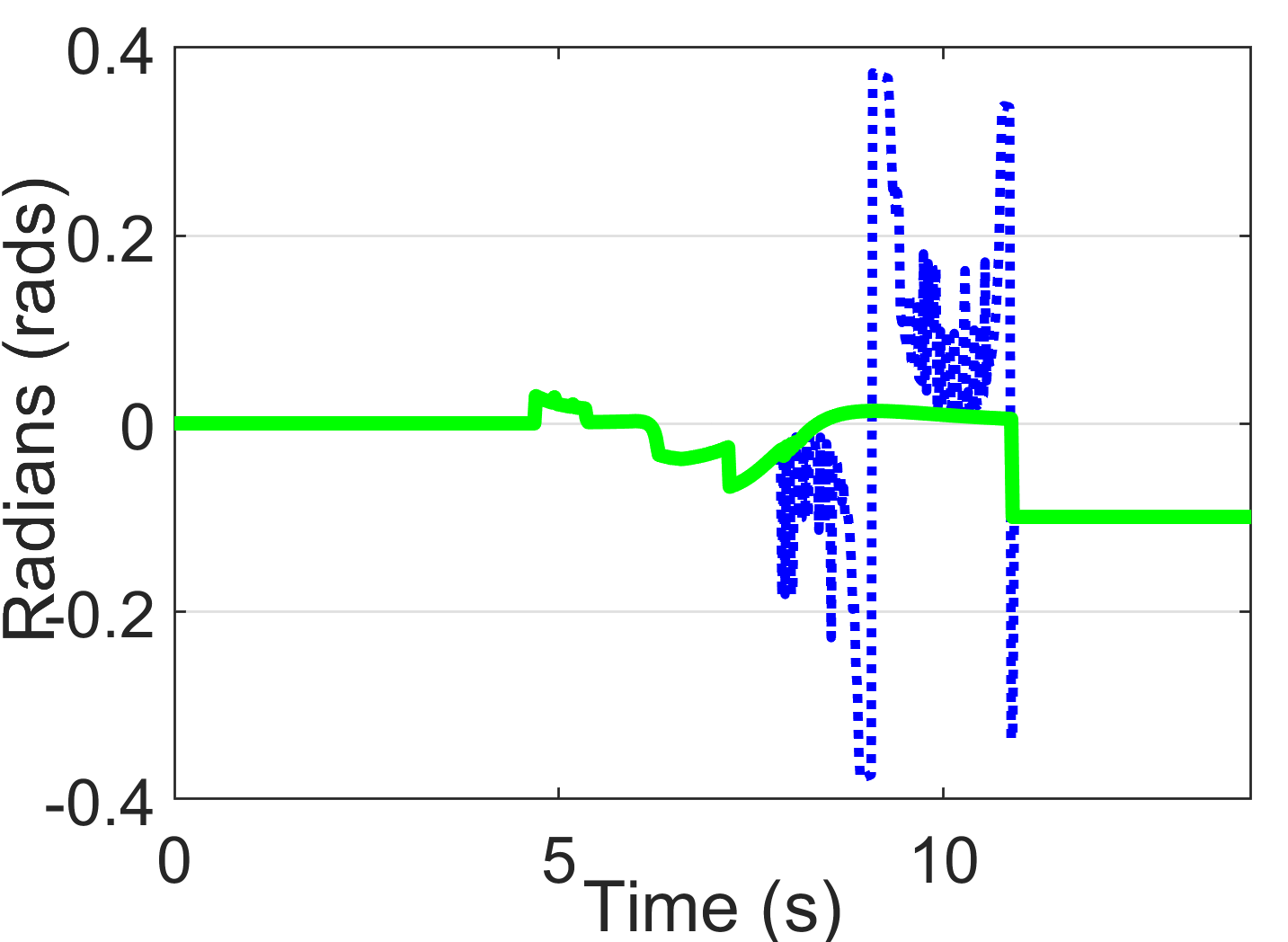}
        \caption{Steering input of CAV 1}
        \label{fig:Steering Input}
    \end{subfigure}
    \caption{Trajectory of critical vehicle and steering input in Scenario 1.}
\end{figure}

\begin{figure}
        \centering
    \begin{subfigure}[b]{0.235\textwidth}
        \includegraphics[width=\textwidth]{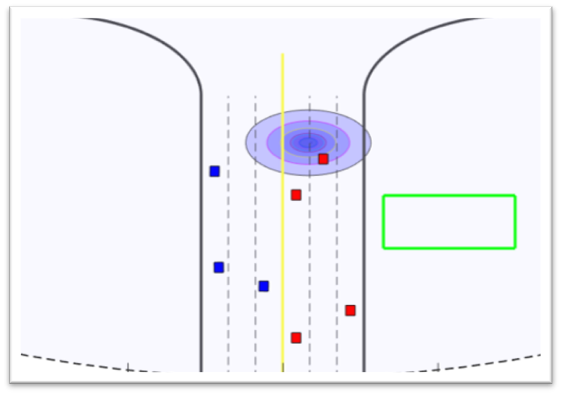}
        \caption{Snapshot 1}
        \label{fig:Snapshot_1_2}
    \end{subfigure}
    \hfill
    \begin{subfigure}[b]{0.235\textwidth}
        \includegraphics[width=\textwidth]{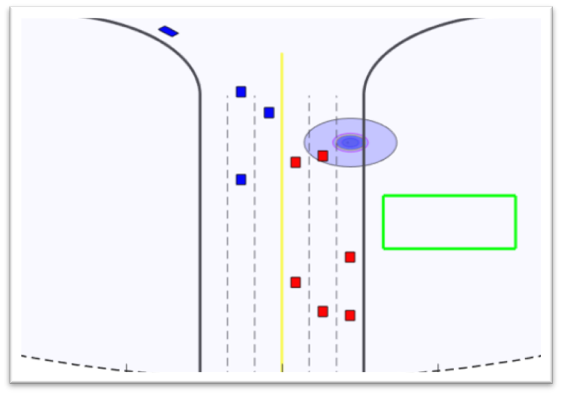}
        \caption{Snapshot 2}
        \label{fig:Snapshot_2_2}
    \end{subfigure}
    \label{fig:Snapshots_2}
    \caption{Snapshots in Scenario 2.}
\end{figure}

Regarding the incorporation of the bicycle kinematic model, we validate its usefulness in Figs. \ref{fig: snapshot 1}, \ref{fig: snapshot 2}, and \ref{fig: snapshot 3}, where CAV 1 successfully avoids the pedestrian. Finally, in Fig. \ref{fig:Trajectories}, we compare the trajectory of CAV 1 under two cases. The dotted line represents the trajectory without the inclusion of the barrier condition discussed in Remark~7, while the solid line represents the case where it is included. It is obvious that its inclusion leads to a smoother trajectory profile that gradually approaches the center of the lane without over-correction during the recovery mode. Conversely, without it, the vehicle tends to over-correct aggressively as it is presented in Fig.~\ref{fig:Steering Input} resulting in oscillations that may cause discomfort.

\textit{Scenario 2:} In this scenario, we used higher initial speeds for the CAVs and lower desired speeds for emergency mode. Additionally, we adopted a more conservative parameter selection to define the unsafe set for each CAV. In Fig. \ref{fig:Snapshot_1_2} we see a moment when several CAVs enter emergency mode due to the presence of a pedestrian. Specifically, Fig \ref{fig:Speeds_2} illustrates that at $t=7.3$ seconds, CAVs 1, 2, 3, and 4 enter emergency mode, while CAVs 5 and 6 enter the control zone a few seconds later and immediately transitioned into emergency mode, lowering their speeds to $4$ m/s. Notably, CAV 2 and 3, which were closest to the pedestrian, brake harder triggering the associated certificates while CAV 2 came even to a full stop unlike in Scenario 1 which lacked a full stop. This behavior could be attributed to either the conservativeness of the unsafe set or the pedestrian's trajectory. The safety between the CAVs and the pedestrian is verified in Fig. \ref{fig:Distances_2}. Fig. \ref{fig:Distances_scenario_2} validates the safety among CAVs. It shows both the minimum rear-end distance among all vehicles and the minimum relative distance from shared conflict points during and after the critical event. This demonstrates that the replanning mechanism effectively generates trajectories that enable all CAVs to safely navigate the intersection. Note that as it is depicted in Fig. \ref{fig:Snapshot_1_2} in this scenario, we plotted different ellipses around the pedestrian, each corresponding to a different CAV, reflecting varying levels of conservativeness among the CAVs.

We have developed a website for the paper, where we present videos from MATLAB and RoadRunner that can help visualize our results. The website can be accessed by the following link: \href{https://sites.google.com/cornell.edu/pedestrianavoidance}{https://sites.google.com/cornell.edu/pedestrianavoidance}.

\vspace{-10pt}
\section{Concluding remarks}
\vspace{-5pt}
 In this paper, we proposed a framework for signal-free intersections operated by CAVs, accounting for the presence of unexpected pedestrians. Initially, we defined the trajectories of the CAVs based on optimal control, addressing corresponding constraints either by identifying an unconstrained trajectory or by applying CBFs. Next, we relaxed the assumption presented in \cite{sabouni2024optimal}, which considers a zero standstill distance between CAVs. We then introduced our approach for collision avoidance between CAVs and pedestrians, using certificates that account for acceleration and steering angle. We also proposed a replanning mechanism that generates safe trajectories for all CAVs within the control zone. Our simulation results validated our approach, demonstrating two scenarios where CAVs successfully used both deceleration and steering to avoid a collision. Moreover, we verified that our replanning mechanism consistently returned safe trajectories for all the CAVs. Ongoing work focuses on exploring the feasibility challenges and limits posed by our approach, as well as the inclusion of HDVs.


\vspace{-10pt}
\section*{Acknowledgments}
\vspace{-5pt}
The authors would like to thank Craig Buhr and Xia Meng, MathWorks' employes, for their help in conceptualizing this paper and for contributing to the simulation‑results section, including the accompanying RoadRunner video.

\vspace{-8pt}
\linespread{0.99}\selectfont
\bibliographystyle{abbrvnat}
\bibliography{IDS_Publications_08232024,Filippos}

\begin{wrapfigure}{l}{0.8in}
    \includegraphics[width=1in,height=1.25in,clip,keepaspectratio]{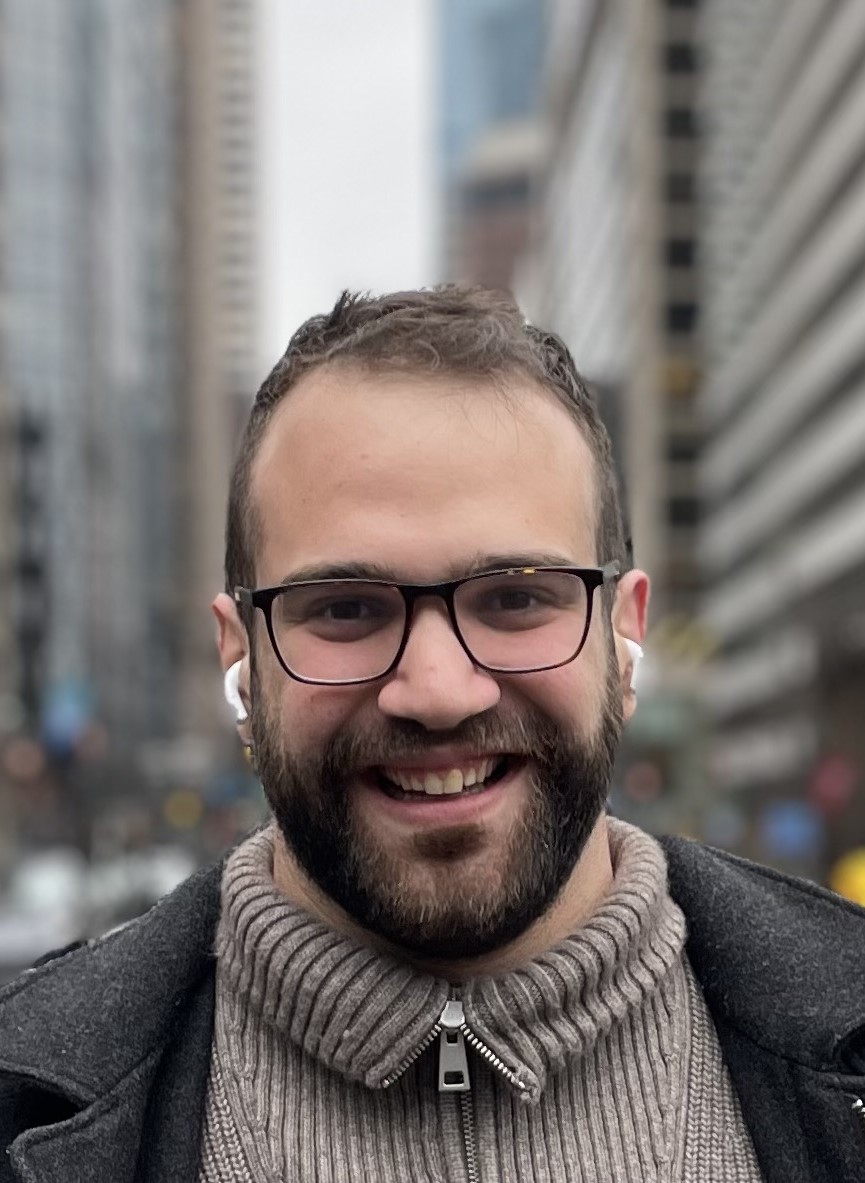}
\end{wrapfigure}

\noindent
\textbf{Filippos N. Tzortzoglou} (Student Member, IEEE) received the Diploma with an integrated M.S. degree in production engineering and management from the Technical University of Crete, Chania, Greece, in 2022. He is currently pursuing the Ph.D. degree with the Civil and Environmental Engineering Department, Cornell University, Ithaca, NY, USA. In 2022, he joined the Mechanical Engineering Department, University of Delaware, Newark, DE, USA, as a Research and Teaching Assistant. Also, in 2024, he joined MathWorks, Natick, MA, USA, as a Research Intern. His research interests lie in the area of automatic control, with applications in transportation and autonomous vehicles. Mr. Tzortzoglou has received several fellowships from foundations across the USA and the National Research Foundation.

\begin{wrapfigure}{l}{0.8in}
    \includegraphics[width=1.25in,height=1.35in,clip,keepaspectratio]{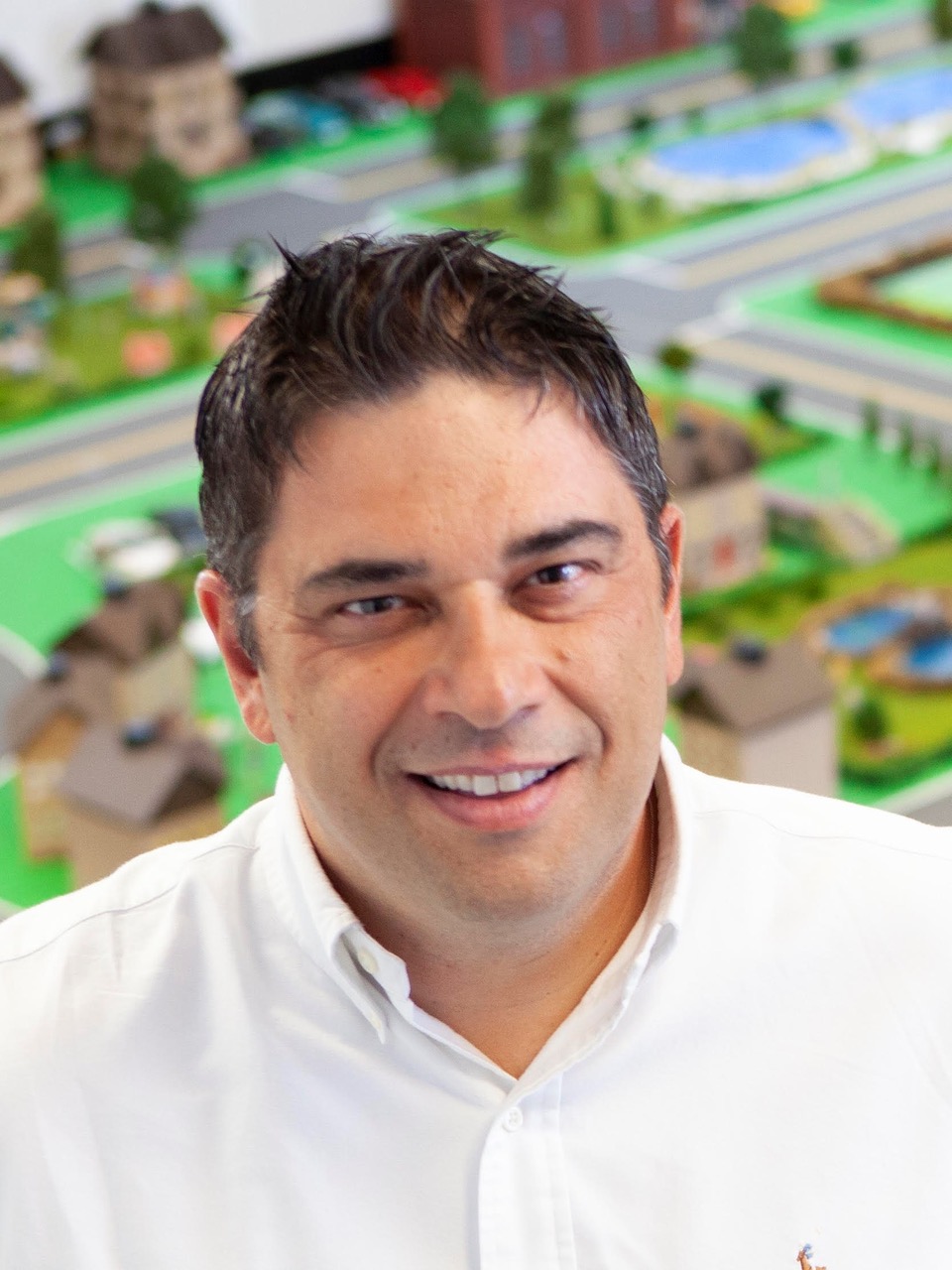}
\end{wrapfigure}

\noindent
\textbf{Andreas A. Malikopoulos} (S’06–M’09–SM’17) received a Diploma in mechanical engineering from the National Technical University of Athens (NTUA), Greece, in 2000. He received M.S. and Ph.D. degrees in mechanical engineering at the University of Michigan, Ann Arbor, Michigan, USA, in 2004 and 2008, respectively. He is a professor at the School of Civil and Environmental Engineering at Cornell University and the director of the Information and Decision Science (IDS) Laboratory. Prior to these appointments, he was the Terri Connor Kelly and John Kelly Career Development Professor in the Department of Mechanical Engineering (2017--2023) and the founding Director of the Sociotechnical Systems Center (2019--2023) at the University of Delaware (UD). Before he joined UD, he was the Alvin M. Weinberg Fellow (2010--2017) in the Energy \& Transportation Science Division at Oak Ridge National Laboratory (ORNL), the Deputy Director of the Urban Dynamics Institute (2014--2017) at ORNL, and a Senior Researcher in General Motors Global Research \& Development (2008--2010). His research spans several fields, including analysis, optimization, and control of cyber-physical systems (CPS); decentralized stochastic systems; stochastic scheduling and resource allocation; and learning in complex systems. His research aims to develop theories and data-driven system approaches at the intersection of learning and control for making CPS able to realize their optimal operation while interacting with their environment. He has been an Associate Editor of the IEEE Transactions on Intelligent Vehicles and IEEE Transactions on Intelligent Transportation Systems from 2017 through 2020. He is currently an Associate Editor of *Automatica* and *IEEE Transactions on Automatic Control*, and a Senior Editor of *IEEE Transactions on Intelligent Transportation Systems*. He is a member of SIAM, AAAS, and a Fellow of the ASME.

\end{document}